\documentclass[letterpaper,10pt,pra,aps,showpacs,twocolumn]{revtex4-1}
\usepackage{hyperref}
\usepackage{amsmath}
\usepackage{amssymb}
\usepackage{amsthm}
\usepackage{amsfonts}
\usepackage[all]{xy}
\usepackage{graphicx}
\usepackage{color}

\newtheorem{lem}{Lemma}
\newtheorem{theorem}{Theorem}
\theoremstyle{definition} \newtheorem{defn}{Definition}
\theoremstyle{definition} \newtheorem{remark}{Remark}

\newcommand{\rank}{\mathrm{rank}}
\newcommand{\supp}{\mathrm{supp}}
\newcommand{\ket}[1]{\left| #1 \right\rangle}
\newcommand{\ie}{\emph{i.e.}}


\begin{document}

\title{Local stabilizer codes in three dimensions without string logical operators}
\author{Jeongwan Haah}
\email{jwhaah@caltech.edu}
\affiliation{Institute for Quantum Information, California Institute of Technology, Pasadena, CA 91125, USA}
\date{28 February 2011}
\pacs{03.67.Pp, 03.67.Lx}

\begin{abstract}
We suggest concrete models for self-correcting quantum memory by reporting examples of local stabilizer codes in 3D that have no string logical operators.
Previously known local stabilizer codes in 3D all have string-like logical operators, which make the codes non-self-correcting.
We introduce a notion of ``logical string segments'' to avoid difficulties in defining one dimensional objects in discrete lattices.
We prove that every string-like logical operator of our code can be deformed to a disjoint union of short segments, each of which is in the stabilizer group. The code has surface-like logical operators whose partial implementation has unsatisfied stabilizers along its boundary.
\end{abstract}

\maketitle

\section{Introduction}

Self-correcting quantum memory is an interesting subject not only because of its application for quantum information processing technology, but also because of its implication for quantum many-body physics; it shows a topological order at finite temperature. 
It is known that in 4D a self-correcting quantum memory is possible: Toric code \cite{DennisKitaevLandahlEtAl2002Topological, AlickiHorodeckiHorodeckiEtAl2008thermal}, which is a CSS stabilizer code. There are classes of models in 2D that are not self-correcting \cite{BravyiTerhal2008no-go, KayColbeck2008Quantum, HaahPreskill2010Logical} including those based on local stabilizer codes. It is thus a natural question whether a self-correcting quantum memory is possible in 3D, at least in the class of models based on stabilizer codes.

A string-like logical operator plays an important role in the thermal instability; its existence is crucial in the no-go theorems \cite{BravyiTerhal2008no-go, KayColbeck2008Quantum, HaahPreskill2010Logical} for self-correcting quantum memory in 2D based on local stabilizer codes, and more generally, on local commuting projector codes. The string-like logical operator arises easily under the interaction with thermal bath, and hence, adversely affect encoded information. Known models in 3D \emph{e.g.}, toric code~\cite{NussinovOrtiz2007Autocorrelations, CastelnovoChamon2008Topological}, Chamon model~\cite{Chamon2005Quantum}, topological color code \cite{BombinMartin-Delgado2007Exact} and Kim model \cite{Kim2010Exactly}, do have string-like logical operators. Bacon subsystem code in 3D \cite{Bacon2006Operator} which does not have string-like bare logical operator, might be self-correcting, but it is not yet affirmative since its Hamiltonian is hard to solve.

There is an issue of defining string-like logical operator. Since a lattice is a discrete space, it is generally not possible to define the dimension of a subset of the lattice. An observation is that a string is a union of segments, each of which has two end points. Thus we define logical string segments as a finite object that has two anchors at the end with its middle part commuting with stabilizer generators. A string logical operator is then a logical operator that contains arbitrarily long logical string segments.

The main result of this paper is that there exist local CSS stabilizer codes in 3D that are free of string logical operators. We give the complete classification of codes (\emph{cubic codes}) under our consideration in Sec.~\ref{section:list_cubic_codes}. We explain how we classify them in Sec.~\ref{section:comm_rel_corner_operators}. In Sec.~\ref{section:macroscopic_code_distance}, we prove that the code distance is at least linear in system size. Sec.~\ref{section:logical_string_segments} is the central section where we define a logical string segment and prove that four of our codes are free of string logical operators. We report exact empirical formulae of the number of logical qubits of our codes in Sec.~\ref{section:k_vs_L}. Finally, we discuss thermal stability of our codes and related issues in Sec.~\ref{section:thermal_stability}. Sec.~\ref{section:discussion} contains our concluding remarks.


Let us review the formalism of stabilizer codes. Let $\mathcal{P}_n$ be the group of Pauli operators acting on $n$ qubits. An abelian subgroup $\mathcal{S}$ of $\mathcal{P}_n$ is called the \emph{stabilizer group} if $ -I \notin \mathcal{S}$. The stabilizer group $\mathcal{S}$ defines a subspace of $n$-qubit Hilbert space by
\[
 \mathcal{C} = \{ \ket{\psi} : s \ket{\psi} = \ket{\psi} \text{for all } s \in \mathcal{S} \},
\]
which is the code space. $\mathcal{C}$ is nonzero because $-I \notin \mathcal{S}$. A CSS code is defined by a stabilizer group, each element of which can be written as a product of $X$- and $Z$- type stabilizer elements. The Pauli group has a nice property that any pair of elements is either commuting or anti-commuting, and that every element squares to identity. If we abelianize the Pauli group $\mathcal{P}_n$ by ignoring all phase factors~\cite{CalderbankRainsShorEtAl1997Quantum}, we obtain $2n$-dimensional vector space over the binary field equipped with a symplectic bilinear form $\lambda$; $\lambda(a,b) = 1$ if $a$ and $b$ anti-commute, and $\lambda(a,b)=0$ if they commute. The product of two Pauli operators is expressed by the addition of the two corresponding vectors. The identity operator is the zero vector.

In this respect, the stabilizer group is characterized as an isotropic subgroup. Note that the condition that $-I \notin \mathcal{S}$ should be checked separately. We abuse the notation and use the same symbol $\mathcal{S}$ to denote the vector space corresponding to the group. The orthogonal complement $\mathcal{S}^\perp$ of $\mathcal{S}$ with respect to the symplectic form is the space of logical operators. The set of nontrivial logical operators modulo stabilizer group is the quotient space $\mathcal{S}^\perp / \mathcal{S}$. A stabilizer code is (geometrically-)\emph{local} if its stabilizer group is generated by (geometrically-)local Pauli operators.

A translation-invariant local stabilizer code can be defined on the infinite lattice. In this case, we define the stabilizer group as a group of all \emph{finite} products of local generators. A logical operator is a Pauli operator possibly with infinite support that commutes with every generator. Since each generator is local, the commutation relation between the generator and an arbitrary Pauli operator is well-defined.


\section{Complete list of cubic codes}
\label{section:list_cubic_codes}

We seek for a simple local stabilizer code that is translation-invariant, encodes at least one logical qubit, has large code distance, and does not have any string logical operator. (Formal definition of string logical operator will be given in Sec.~\ref{section:logical_string_segments}.)
To start with, consider a local stabilizer code on a $D$-dimensional simple cubic lattice $\mathbb{Z}^D$ with one qubit at each site. A general stabilizer generator may act on a bounded number of qubits in an arbitrary way. However, if we coarse-grain the lattice, or equivalently, put $m \ge 1$ qubits at each site of the lattice, we can say without loss of generality that a generator acts on the qubits on $2^D$ sites of a unit hypercube. 

We focus on stabilizer codes with only two types of generators for simplicity. Since each generator can be described by a $2m \times 2^D$-component binary vector, there are a finite yet large number of conceivable generators. We will demand a certain structure of generators in order to reduce the number of candidates. We will further impose conditions such that the code does not contain any nontrivial logical operator on a ``straight line'', which will be necessary for the codes to be without string logical operator.

The structure of the generators is restricted in the following way: For CSS codes, there are two types of generators corresponding to $Z$- and $X$-type. We denote by $\alpha_i$ a corner of the cube of generator type $i$, and by $\alpha'_i$ the body-opposite corner as depicted in Fig.~\ref{fig:general_generator}. For non-CSS codes, the generators should satisfy $\alpha_1 = \alpha'_2$.

\begin{figure}
\centering
\includegraphics[width=0.35\textwidth]{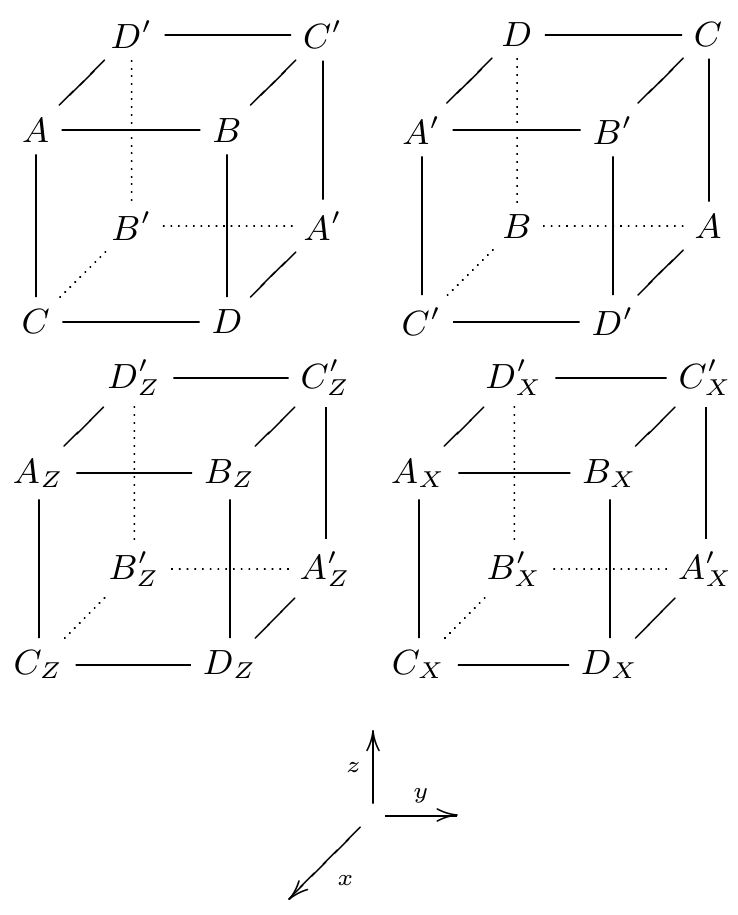}
\caption{Stabilizer Generators for non-CSS(top) and CSS(bottom) cubic codes. Throughout the paper we fix the coordinate system as shown.}
\label{fig:general_generator}
\end{figure}

The number $m$ of qubits per site should be bounded by the number of types of generators. For a local stabilizer code in any finite lattice, with open or periodic boundary conditions, there is a tradeoff in 3D \cite{BravyiPoulinTerhal2010Tradeoffs}
\[
 k d = O(L^3)
\]
between $k$, the number of logical qubits, and $d$, the code distance, where $L$ is the linear size of the lattice. If there are $t < m$ types of generators, the number of independent stabilizers is at most $t L^3$, and $k \geq (m-t)L^3$. The code distance is then a constant independent of $L$. In order to achieve macroscopic code distance, it is mandatory that $m \leq t$.

A string operator may wrap around a finite periodic lattice many times that it looks like a surface. But this is a property of boundary; as long as thermal stability is concerned, we ignore the boundary effects and consider stabilizer codes in the infinite lattice $\Lambda = \mathbb{Z}^3$.
If a single site operator $E$, \ie, two qubit operator, is logical, we want it to be an element of the stabilizer group $\mathcal{S}$. Since the stabilizer group does not explicitly include a single site operator, it is not easy to formulate the condition $E \in \mathcal{S}$. For simplicity, we require that $E$ is the identity up to phase.
For a single site operator $E$, we denote by $E[v]_p$ the Pauli operator repeated along the line parallel to $v$ passing $p$, \ie,
\begin{equation}
E[v]_p = \cdots \otimes E \otimes E \otimes \cdots,
\end{equation}
whose \emph{support}, the set of \emph{sites} on which a Pauli operator acts nontrivially, is the line,
\[
\supp (E[v]_p) = \{ p + n v \in \Lambda ~|~ n \in \mathbb{Z} \} .
\]
We say $E[v]$ has \emph{period one} if $\| v \|_\infty = 1$. ($\|(a,b,c)\|_\infty = \max\{ |a|, |b|, |c| \}$.) We demand that any logical operator of period one be the identity up to phase. This condition is not sufficient for the code to be free of string logical operators, but is necessary. We will see that a nontrivial logical operator of period one is a string logical operator in our formal definition of strings in Sec.~\ref{section:logical_string_segments}.

Imposing the constraints above may result in a trivial code in a finite lattice for which the number of encoded qubits is zero ($k=0$). To avoid such a case, we restrict the generators such that the product of all corner operators to be the identity operator up to phase. This condition is automatically satisfied by non-CSS codes under consideration. For CSS codes, this becomes a nontrivial algebraic constraint on the corner operators.

There are equivalence relations on the set of codes. If two stabilizer codes are related by a symmetry transformation of the unit cube, they are essentially the same. If one can be transformed into the other by a basis change on each site, we also regard them as the same codes. Renaming of stabilizer generators obviously gives equivalent codes. Up to these equivalences, we report that there are 1 non-CSS and 17 CSS \emph{cubic codes} listed in Table~\ref{tb:list_codes}. The conditions of the cubic codes are summarized below.
\begin{description}
 \item[Condition 1] There are one or two qubits per site in the infinite simple cubic lattice $\mathbb{Z}^3$.
 \item[Condition 2] The stabilizer group $\mathcal{S}$ is translation-invariant and is generated by two types of operators acting on eight corners of a cube. For non-CSS code, the two are related by spatial inversion, \ie, $\alpha'_1 = \alpha_2$. See Fig.~\ref{fig:general_generator}. The product of all corner operators of a CSS code is the identity.
 \item[Condition 3] If $ E \in \mathcal{S}^\perp $ is a single site operator, then $E$ is the identity up to phase.
 \item[Condition 4] If $ l \in \mathcal{S}^\perp $ has period one in $l_\infty$-metric, \ie, $\supp(l)$ is along one of 3 coordinate axes, 6 face-diagonals, or 4 body-diagonals, then $l$ is the identity up to phase.
\end{description}
In the next section, we will study the conditions systematically.

\begin{table}
\centering
\newcommand{\writegenerator}[8]{$#1$ & $#2$ & $#3$ & $#4$ & $#5$ & $#6$ & $#7$ & $#8$}
 \begin{tabular}{c|cccccccc}
\hline
\hline
 & \multicolumn{8}{c}{Corner operators $\alpha$} \\
 & \writegenerator{A}{B}{C}{D}{A'}{B'}{C'}{D'} \\
\hline
\hline
$0^\dagger$ & \writegenerator{XX}{ZI}{ZY}{XY} {ZZ}{II}{XZ}{ZX} \\
\hline
$1^\star$   & \writegenerator{ZI}{ZZ}{IZ}{ZI} {IZ}{II}{ZI}{IZ} \\ 
$2^\star$   & \writegenerator{IZ}{ZZ}{ZI}{ZI} {ZI}{ZZ}{IZ}{ZI} \\ 
$3^\star$   & \writegenerator{IZ}{ZZ}{ZZ}{ZI} {ZZ}{II}{IZ}{IZ} \\ 
$4^\star$   & \writegenerator{IZ}{ZZ}{ZI}{ZI} {IZ}{II}{IZ}{ZI} \\ 
$5$         & \writegenerator{ZI}{ZZ}{II}{ZZ} {ZI}{II}{IZ}{IZ} \\ 
$6$         & \writegenerator{ZI}{II}{ZI}{ZZ} {IZ}{ZZ}{II}{IZ} \\ 
$7$         & \writegenerator{ZI}{ZZ}{ZI}{IZ} {IZ}{II}{II}{ZZ} \\ 
$8$         & \writegenerator{ZI}{ZI}{IZ}{ZZ} {IZ}{II}{IZ}{ZI} \\ 
$9$         & \writegenerator{ZI}{IZ}{ZZ}{ZZ} {IZ}{ZZ}{II}{IZ} \\ 
$10$        & \writegenerator{ZI}{IZ}{ZI}{ZZ} {IZ}{ZZ}{ZI}{ZI} \\ 
$11^\dagger$& \writegenerator{ZI}{ZZ}{II}{IZ} {ZI}{II}{IZ}{ZZ} \\ 
$12^\dagger$& \writegenerator{ZI}{IZ}{ZZ}{ZZ} {ZI}{II}{II}{IZ} \\ 
$13^\dagger$& \writegenerator{ZI}{ZZ}{IZ}{ZI} {IZ}{II}{II}{ZZ} \\ 
$14^\dagger$& \writegenerator{ZI}{IZ}{ZZ}{ZZ} {IZ}{II}{ZZ}{IZ} \\ 
$15^\dagger$& \writegenerator{ZI}{IZ}{II}{ZZ} {IZ}{ZZ}{II}{ZI} \\ 
$16^\dagger$& \writegenerator{ZI}{ZI}{II}{IZ} {IZ}{ZZ}{II}{ZZ} \\ 
$17^\dagger$& \writegenerator{ZI}{ZZ}{IZ}{ZI} {IZ}{ZI}{ZI}{ZZ} \\ 
\hline
\hline

\end{tabular}
\caption{Complete list of cubic codes. The corners of the unit cube are labeled as in Fig.\ref{fig:general_generator}. The second generator of non-CSS Code 0 is given by the spatial inversion about body-center. The rest are all CSS codes, for which $X$-type generator is uniquely determined by eq.~\eqref{eq:generator_inversion_symmetry}. The codes marked with $\star$ do \emph{not} have string logical operators, while those with $\dagger$ do. See Theorem~\ref{thm:Code1234are_free_of_strings} and Appendix~\ref{section:codes_w_string_logical_operators} }
\label{tb:list_codes}
\end{table}

\section{Commutation relations of corner operators}
\label{section:comm_rel_corner_operators}

Given a set $\{ g_1, \ldots, g_n \} \subseteq \mathcal{P}_m$ of $n$ Pauli operators acting on $m$ qubits, we can express their commutation relations in an $n \times n$ skew-symmetric (and, at the same time, symmetric) matrix $\omega$ over the binary field.
\[
\omega_{ij} = g_i^T \lambda g_j
\]
If we express $g_i$'s in the columns of a $2m \times n$ matrix $P$, then obviously
\[
  P^T \lambda P = \omega .
\]
Since $\omega$ is skew-symmetric, $r \equiv \rank(\omega)$ is even. Note that the rank of $\lambda$ is $2m$. Since $ r \leq \min \{ \rank(P), \rank(\lambda) \} $, we see that $r/2$ is the minimum possible number of qubits on which Pauli operators of $P$ act. Conversely,
\begin{lem}
 Given a commutation relation $\omega$ of Pauli operators, all realizations $P$ of $\omega$ using minimum number of qubits are equivalent up to symplectic transformations.
\label{lem:unique_realization_of_commutation_relation}
\end{lem}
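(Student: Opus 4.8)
The plan is to reformulate the statement in the abelianized (symplectic) picture and to produce the required transformation explicitly. A realization using the minimum number $m = r/2$ of qubits is a $2m \times n$ matrix $P$ over $\mathbb{F}_2$ with $P^T \lambda P = \omega$; since $r = \rank(\omega) \leq \rank(P) \leq 2m = r$, minimality forces $\rank(P) = 2m$, so the columns $g_1, \dots, g_n$ of $P$ span all of $\mathbb{F}_2^{2m}$. Given two such realizations $P$ and $P'$, the goal is to exhibit a symplectic transformation $S$ (meaning $S^T \lambda S = \lambda$) with $SP = P'$, which is exactly the asserted equivalence.

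First I would isolate a distinguished set of columns that is intrinsic to $\omega$. Because each Pauli operator commutes with itself, $\omega_{ii} = g_i^T \lambda g_i = 0$, so $\omega$ is an \emph{alternating} form, not merely skew-symmetric. For such a form of rank $2m$ I claim there is an index set $J$ with $|J| = 2m$ whose principal submatrix $\omega_J = (\omega_{jk})_{j,k \in J}$ is nonsingular: extend a basis of the radical $\{x : \omega x = 0\}$ to a basis of $\mathbb{F}_2^n$ using standard basis vectors $e_j$, $j \in J$; the resulting coordinate subspace is a complement to the radical, on which an alternating form is automatically nondegenerate, and its matrix in this basis is precisely $\omega_J$. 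This $J$ depends only on $\omega$, hence is common to $P$ and $P'$.

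Next I would show that, for any minimal realization, the columns $\{g_j\}_{j \in J}$ form a basis and that the expansion of every other column in this basis is dictated by $\omega$ alone. If $\sum_{j \in J} c_j g_j = 0$, pairing with $g_k$ for $k \in J$ gives $\sum_j c_j \omega_{jk} = 0$, i.e. $\omega_J c = 0$, forcing $c = 0$; thus $\{g_j\}_{j \in J}$ are independent and, as $|J| = 2m = \dim$, a basis. Writing $g_i = \sum_{j \in J} c_{ij} g_j$ and pairing with $g_k$ yields $\omega_{ik} = \sum_j c_{ij}\omega_{jk}$, so the coefficient row equals $(\omega_{ik})_{k \in J}\,\omega_J^{-1}$, a function of $\omega$ only. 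Consequently the identical coefficients govern the expansion of the columns of $P'$ in the basis $\{g'_j\}_{j \in J}$.

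Finally I would build $S$. Let $B$ and $B'$ be the invertible $2m \times 2m$ matrices formed by the $J$-columns of $P$ and $P'$, and set $S = B'B^{-1}$. Then $S^T \lambda S = (B^{-1})^T B'^T \lambda B' B^{-1} = (B^{-1})^T \omega_J B^{-1} = (B^{-1})^T B^T \lambda B B^{-1} = \lambda$, so $S$ is symplectic, and $S g_j = g'_j$ for $j \in J$. For an arbitrary column, $S g_i = \sum_{j} c_{ij}\, S g_j = \sum_j c_{ij}\, g'_j = g'_i$ by the coefficient identity of the previous step, whence $SP = P'$. The one step needing genuine care—and the main obstacle—is the existence of a nonsingular \emph{principal} submatrix of size $\rank(\omega)$: over characteristic two this can fail for general symmetric matrices, and it is precisely the alternating (zero-diagonal) property of $\omega$ that rescues it through the radical-complement argument above.
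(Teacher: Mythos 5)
Your proof is correct, but it takes a genuinely different route from the paper's. The paper works with \emph{rows}: by minimality each realization $P_i$ has full row rank $2m$, so it can be extended by $n-2m$ extra rows to an invertible $n\times n$ matrix $P_i^e$; with the degenerate extension $\lambda^e = \begin{pmatrix} \lambda & 0 \\ 0 & 0 \end{pmatrix}$ one has $(P_i^e)^T\lambda^e P_i^e=\omega$, so the transition matrix $S^e = P_1^e (P_2^e)^{-1}$ preserves $\lambda^e$, and the general block form $S^e=\begin{pmatrix} S & 0 \\ C & D \end{pmatrix}$ of such a transformation (with $S^T\lambda S=\lambda$) immediately yields $P_1 = S P_2$. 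You instead work with \emph{columns} of $\omega$: you extract an index set $J$ with $\omega_J$ nonsingular, show the $J$-columns of any minimal realization form a basis, observe that the coordinates of every other column in that basis are dictated by $\omega$ alone, and assemble $S=B'B^{-1}$ explicitly. Both arguments are sound; yours is more constructive (it exhibits $S$ concretely and isolates exactly which data is intrinsic to $\omega$), while the paper's is shorter and sidesteps your key sub-lemma on principal submatrices, at the cost of invoking the classification of maps preserving a degenerate form. One quibble with your closing remark: the claim that the principal-submatrix property ``can fail for general symmetric matrices'' in characteristic two is not right. Your own radical-complement argument uses only symmetry of $\omega$ (so that $u^T\omega=0$ if and only if $\omega u = 0$), never the zero diagonal, and hence proves the property for \emph{every} symmetric matrix over any field; what the alternating property actually buys is that the rank is even, i.e.\ that the minimal qubit number $r/2$ is an integer, not the existence of the nonsingular principal submatrix.
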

\begin{proof}
The rank of $P$ is at least $r = 2m$. Being of full rank, $P$ has linearly independent rows and we can add extra $n-2m$ rows to $P$ so that the extension $P^e$ of $P$ is invertible. Let $P_1, P_2$ be two solutions realizing $\omega$, and $P^e_1 , P^e_2$ be their arbitrary invertible extensions respectively. We have
\[
 (P_1^e)^T \lambda^e P_1^e = \omega = (P_2^e)^T \lambda^e P_2^e 
\]
where
\[
\lambda^e = \begin{pmatrix}\lambda & 0 \\ 0 & 0  \end{pmatrix}.
\]
Therefore, $S^e = P_1^e (P_2^e)^{-1}$ is a symplectic transformation preserving $\lambda^e$. The most general form of a transformation preserving $\lambda^e$ is
\[
S^e = \begin{pmatrix} S & 0 \\ C & D \end{pmatrix},
\]
where $S$ is such that $\lambda = S^T \lambda S$. Immediately, $P_1^e = S^e P_2^e$, or $ P_1 = S P_2 $.
\end{proof}

We will translate all the requirements for the cubic codes into conditions on commutation relation matrix $\omega$ of corner operators.
First, $\omega$ must represent a stabilizer code. The generators at different locations will commute if the components of $\omega$ satisfies a certain linear equation. We must consider all the cases when two generators meet with each other at a site, at an edge, at a face, and when they overlap completely. Note that this classification is based on Condition 2.
For non-CSS codes, the equations are
\begin{align*}
\omega(A,A')& = 0 , & \omega(B,B')& = 0 , \\
\omega(C,C')& = 0 , & \omega(D,D')& = 0 
\end{align*}
for the generator meeting at a site,
\begin{align*}
\omega(A,C') + \omega(C,A')& = 0 , & \omega(B,D') + \omega(D,B')& = 0 , \\
\omega(A,B') + \omega(B,A')& = 0 , & \omega(C,D') + \omega(D,C')& = 0 , \\
\omega(C,B) + \omega(B',C')& = 0 , & \omega(A,D) + \omega(D',A')& = 0 
\end{align*}
for them meeting at an edge,
\begin{align*}
\omega(A,B) + \omega(C,D) + \omega(B',A') + \omega(D',C')& = 0 , \\
\omega(A,C) + \omega(B,D) + \omega(D',B') + \omega(C',A')& = 0 , \\
\omega(A,D') + \omega(B,C') + \omega(C,B') + \omega(D,A')& = 0 
\end{align*}
for them meeting at a face. When generators meet at a cube, they automatically commute. Above 13 equations are independent since each term, \emph{e.g.} $\omega(A,A')$, appears only in one equation. For CSS codes, we only need to consider commutation relation between $X$-type and $Z$-type. There are 8 equations for the case when two generators meet with each other at a site, 12 equations when they meet at an edge, 6 equations when they meet at a face, and 1 equation when they overlap completely. One easily checks that these 27 equations are independent since each term appears only once.

Consider a non-CSS cubic code for which $\omega$ is $8 \times 8$. Let $m \geq 1$ be the number of qubits per site, and $P$ be the $2m \times 8$ matrix whose columns consist of corner operators. A single site operator $E$ is logical if and only if $P^T \lambda E = 0$. Given $E$ logical, we require $E = 0$, or $P^T \lambda$ have the trivial kernel (Condition 3). In other words, $\rank(P)=2m$. Therefore, we must have
\begin{equation}
 r = \rank ~ (\omega) = 2m .
\label{eq:no_single_site_logical_condition}
\end{equation}
Note that it implies $m$ be the minimum possible realizing $\omega$. Conversely, if $r=2m$, we need at least $m$ qubits to realize $\omega$, and $\rank(P) \geq 2m$. Therefore, $P^T \lambda$ has trivial kernel, and there is no nontrivial logical operator supported on a single site.

Consider a logical operator $E[\hat{y}]$ of period one along $y$-axis. $E \otimes E$ commute with $A \otimes B$ if and only if 
\[
 \lambda(A \otimes B, E \otimes E) = \lambda(A,E) + \lambda(B,E) = \lambda(AB,E) = 0.
\]
See Fig.~\ref{fig:general_generator}. Hence, $E[\hat{y}]$ is logical if and only if
\begin{align*}
\lambda(AB, E)   &=0   & \lambda(CD, E)   &=0 \\
\lambda(B'A', E) &=0   & \lambda(D'C', E) &=0
\end{align*}
These equations form a system of linear equations with unknown $E$, a $2m$-component column vector. The coefficient matrix $M$ is a $4 \times 2m$ matrix.
\[
M = R P^T \lambda
\]
where $A$ is expressed in the first row of $P^T$, $B$ in the second, etc, and
\begin{equation}
R = 
\begin{pmatrix}
1&1 & 0&0 & 0&0 & 0&0 \\
0&0 & 1&1 & 0&0 & 0&0 \\
0&0 & 0&0 & 1&1 & 0&0 \\
0&0 & 0&0 & 0&0 & 1&1 
\end{pmatrix}.
\label{eq:deriving_matrix}
\end{equation}
Given $M E = 0$, we require $E=0$ (Condition 4). By eq.\eqref{eq:no_single_site_logical_condition}, we see that $P$ can be extended to be invertible, so that $\rank( R P^T \lambda ) = \rank (R (P^e)^T \lambda^e P^e ) = \rank ( R \omega )$. The requirement becomes a simple formula:
\begin{equation}
\rank \ ( R \omega ) = 2m .
\end{equation}
Conversely, if $\rank(R \omega)=2m$, then there is no logical operator of period one along $y$-axis. In an analogous manner, we consider all 13 logical operators along lines that are respectively parallel to 3 coordinate axes, to 6 face-diagonals, and to 4 body diagonals. Note that the `derived' matrix $R \omega$ is calculated by adding rows of $\omega$ corresponding to corners that the logical operator of period one passes through.

A CSS cubic code has 16 corner operators. The corners belonging to one of generators automatically commute with each other. Therefore, $\omega$ has non-zero elements in off-diagonal blocks if we order the corner operators as $\{ A_Z, B_Z ,\ldots, A_X, B_X, \ldots \}$:
\[
 \omega =
\begin{pmatrix}
 0 & \omega' \\
 \omega'^T & 0
\end{pmatrix}
\]
Let us also order the basis of Pauli group $\mathcal{P}_m$ such that $Z$-operators come first and
\[
 \lambda =
\begin{pmatrix}
 0 & I \\
 I & 0
\end{pmatrix} .
\]
The triviality of a single site operator is expressed as $ \rank( \omega ) = 2m$, or $\rank( \omega' ) = m$. Consider $X$-type logical operator $x[\hat{y}]$. It is logical if and only if
\begin{align*}
\lambda(A_Z  B_Z,  x) &=0 &
\lambda(C_Z  D_Z,  x) &=0 \\
\lambda(B'_Z A'_Z, x) &=0 &
\lambda(D'_Z C'_Z, x) &=0
\end{align*}
which is equivalent to a matrix equation
\begin{equation}
 R P_Z^T x = 0
\label{eq:no_lineop_matrix}
\end{equation}
where $x$ is an $m$-component vector, $A_Z$ is expressed in the first row of $P_Z^T$, etc, and $R$ is given by eq.\eqref{eq:deriving_matrix}. Since $\omega' = (P_Z)^T P_X$ has rank $m$, $P_Z$ and $P_X$ are both of full rank $m$, and we can extend them to be $P_Z^e, P_X^e$ that are invertible. Since
\[
\rank(R P_Z^T)=\rank \left(  R (P^e_Z)^T \begin{pmatrix} I & 0 \\ 0 & 0 \end{pmatrix} P^e_X \right),
\]
the matrix equation \eqref{eq:no_lineop_matrix} is equivalent to 
\begin{equation}
 m = \rank ( R \omega') .
\end{equation}
As in the non-CSS case, there are 12 more equations ensuring the triviality of the $X$-type logical operator of period one. For $Z$-type logical operators, the equations are of form
\begin{equation}
 m = \rank ( R \omega'^T) .
\end{equation}

We point out that it is a property of $\omega$ whether or not the product of all corner operators yield the identity by Lemma~\ref{lem:unique_realization_of_commutation_relation}. We have shown that the triviality of the single site operator (Condition 3) implies that any cubic code is a minimal realization of its commutation relation matrix $\omega$. Therefore, any two cubic codes with the same $\omega$ are related by a symplectic transformation, which is, in particular, an invertible linear map. The product of all the corner operators of one code is zero (\ie, the identity), if and only if the product of all the corner operators of the other is zero.

We thus completed the translation of the conditions for cubic codes into those for the commutation relation matrix of the corner operators. An advantage of this approach is that we are classifying the codes up to symplectic transformation on sites. Moreover, the cases $m=1$ and $m=2$ are treated simultaneously.

There are $2^{n_\mathrm{CSS}}$ $\omega$'s of CSS codes that are consistent with the condition that the generators define a stabilizer codes, and $2^{n_\mathrm{non-CSS}}$ $\omega$'s of non-CSS codes, where $ n_\mathrm{CSS} = 8\cdot8 - 27 = 37 $ and $ n_\mathrm{non-CSS} = {_8 C _2} - 13 = 15 $. After exhaustive search, we found no instance of $\omega$ satisfying the conditions when $m=1$. Up to the symmetry group of the unit cube and renaming of generators for CSS codes ($\omega' \leftrightarrow \omega'^T$), we finally obtain Table~\ref{tb:list_codes}. Hereafter, we will call each code by Code 0, Code 1, etc., according to Table~\ref{tb:list_codes}.

The generators of CSS cubic codes show additional symmetries that we did not impose. Namely, $Z$-type generators and $X$-type generators are related by spatial inversion. Recall that $\alpha_X, \alpha_Z, \alpha'_X, \alpha'_Z$ ($\alpha = A,B,C,D$) denote the corner operators, each of which is a 2-qubit operator. Since they are purely $Z$- or $X$-type, we express them by 2-component binary column vectors. For example, $A_Z = ZI = (1 \ 0 )^T$ of Code 1. We observe the following rule:
\begin{align}
 \alpha'_X &= \begin{pmatrix} 0 & 1 \\ 1 & 0 \end{pmatrix} \alpha_Z , &
 \alpha_X  &= \begin{pmatrix} 0 & 1 \\ 1 & 0 \end{pmatrix} \alpha'_Z
\label{eq:generator_inversion_symmetry}
\end{align}
for all $\alpha = A,B,C,D$. Because of this rule, there is a duality between $X$- and $Z$-type logical operators. That is, given the fixed origin of the lattice, for every $X$-type logical operator $O$, there exists a unique $Z$-type logical operator obtained by the spatial inversion about the origin followed by the symplectic transformation on each site defined by eq.\eqref{eq:generator_inversion_symmetry}. Hence we will consider only $X$-type logical operators and $Z$-type stabilizer generators for CSS cubic codes.

\section{Macroscopic code distance}
\label{section:macroscopic_code_distance}

In this section, we prove
\begin{theorem}
Let $d$ be the code distance of Code 0,1,2,3, or 4 defined on the periodic finite lattice $\mathbb{Z}_L^3$. Then $ d \geq L $.
\label{thm:macroscopic_code_distance}
\end{theorem}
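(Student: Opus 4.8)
The plan is to argue by contradiction: suppose some nontrivial logical operator $O$ on the torus $\mathbb{Z}_L^3$ has weight $w \le L-1$, and deduce that $O$ must in fact lie in the stabilizer group, contradicting its nontriviality. First I would exploit the smallness of $w$ to produce ``gaps'' in the support. The number of distinct $x$-coordinates appearing in $\supp(O)$ is at most the number of occupied sites, which is at most $w \le L-1 < L$, so there is at least one plane $x = a$ disjoint from $\supp(O)$; likewise there are planes $y = b$ and $z = c$ that $O$ avoids. Thus $\supp(O)$ fits, after relabeling residues, inside a box of side $L-1$ bounded away from these three planes.

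Next I would \emph{unfold} the torus into the infinite lattice $\Lambda = \mathbb{Z}^3$. Cutting along the three empty planes and lifting the fundamental domain $\{a+1,\dots,a+L-1\}\times\{b+1,\dots,b+L-1\}\times\{c+1,\dots,c+L-1\}$ into $\mathbb{Z}^3$, I define $\tilde O$ to equal $O$ on this box and the identity elsewhere. The key point is that commuting with a generator is a \emph{local} condition, so I need only check $\tilde O$ against each cube generator of $\Lambda$ individually. A cube lying entirely in the box is a translate of a torus generator and commutes with $\tilde O$ because $O$ commuted with it on the torus; a cube meeting one of the cut planes sees $\tilde O$ only through its sites off that plane, where $\tilde O$ agrees with $O$, so its commutation value equals that of the corresponding wrapped torus generator, namely zero; a cube disjoint from the box commutes trivially. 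Hence $\tilde O$ is a finite-support logical operator of the infinite code.

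The crux is then to show that every finite-support logical operator of the infinite cubic code is an element of $\mathcal{S}$, \ie\ a finite product of generators. I would prove this by peeling off layers. Let $z = h$ be the topmost plane met by $\supp(\tilde O)$. Every cube generator whose upper face sits at $z = h+1$ has empty support there, so its commutation with $\tilde O$ is governed entirely by its four lower-face corner operators paired against $\tilde O|_{z=h}$. The triviality of single-site logical operators, \ie\ the full-rank condition~\eqref{eq:no_single_site_logical_condition} that makes the corner operators span the on-site operator space, should let me multiply $\tilde O$ by a suitable finite product of generators whose \emph{upper} faces lie in the plane $z=h$, thereby clearing that plane while lowering $h$. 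Iterating drives the support to nothing, exhibiting $\tilde O$ as a product of generators; the only obstruction to termination would be a straight-line (period-one) logical operator, which is excluded by Condition 4. Folding this decomposition back onto the torus writes $O$ as a product of torus generators, so $O \in \mathcal{S}$, the desired contradiction, whence $d \ge L$.

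I expect the layer-peeling step --- establishing that a finite-support logical operator of the infinite code is necessarily a stabilizer --- to be the main obstacle, since it is precisely where the special algebraic structure of Codes~0--4 must enter rather than generic locality; the delicate part is ensuring that the on-site cancellation at plane $z=h$ can always be realized by \emph{honest} generators and that the process halts. A cleaner alternative would be to recast the stabilizer maps as homomorphisms of the Laurent-polynomial module $\mathbb{F}_2[x^{\pm},y^{\pm},z^{\pm}]^{2}$ and deduce the vanishing of finite logical operators from exactness of the associated complex, checked code by code.
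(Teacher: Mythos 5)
Your first step --- the reduction from the torus to the infinite lattice --- is correct, and it is a genuinely different route from the paper's Lemma~\ref{lem:trivial_finite_lop_vs_d}: you use the pigeonhole fact that an operator on at most $L-1$ sites misses a plane in each of the three directions and cut along those planes, whereas the paper argues via connected components and lifts closed paths (trivial homological cycles) to the universal cover. Your version is more elementary and the unfolding/refolding bookkeeping (every infinite-lattice cube either lies in the lifted box, straddles a cut plane where both $O$ and $\tilde{O}$ act trivially, or misses the box entirely; and the projection $\mathbb{Z}^3\to\mathbb{Z}_L^3$ is a homomorphism sending generators to generators) goes through. Both approaches reduce Theorem~\ref{thm:macroscopic_code_distance} to the same core claim: every finite logical operator of the infinite code is a finite product of stabilizer generators.

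The gap is in your layer-peeling proof of that core claim, and it is twofold. First, the clearing step is unsupported: you need that $\tilde{O}|_{z=h}$, which is constrained \emph{only} to commute with the lower faces of the generators spanning planes $\{h,h+1\}$, is a finite product of \emph{upper} faces of generators spanning $\{h-1,h\}$. This is a statement about a two-dimensional ``face code,'' and it does not follow from Condition 3: eq.~\eqref{eq:no_single_site_logical_condition} says the eight corner operators jointly have full-rank commutation matrix, which controls single-site operators, not whether the commutant of the four lower-face corners equals the group generated by the four upper-face corners. When the clearing statement is true, it is true for code-specific reasons tied to eq.~\eqref{eq:generator_inversion_symmetry} and the particular corner operators (the paper says explicitly that its deformation technique ``depends on eq.~\eqref{eq:generator_inversion_symmetry}''), and it must be verified code by code --- this is exactly the role of the paper's ``good for erasing'' edges, and Appendix~\ref{section:pf_thm_code_34} shows that even Codes 3 and 4 have faces where the naive version fails and extra arguments are needed. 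Second, even granting the clearing step, your iteration does not ``drive the support to nothing'': each clearing multiplies in generators whose lower faces land on plane $h-1$, so the support's thickness shrinks to one, after which the single remaining plane is merely translated downward forever and the accumulated product of generators is infinite. Termination requires the separate fact that a finite logical operator supported on a single plane is the identity --- again a two-dimensional statement about the specific corner operators. Your appeal to Condition 4 cannot supply this: Condition 4 excludes \emph{infinite} period-one line operators, holds by construction for all eighteen cubic codes in Table~\ref{tb:list_codes}, and says nothing about finite planar residues. The paper's proof is engineered around precisely these two issues: it erases site by site at a corner of the bounding box, keeping every multiplication inside the box (so the product is automatically finite), and then kills the residual thin rectangles using edges whose two corner operators are independent. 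Your closing suggestion --- recasting the problem over $\mathbb{F}_2[x^{\pm},y^{\pm},z^{\pm}]$ and proving exactness code by code --- is indeed a viable way to make layer-peeling rigorous (the face constraints become divisibility statements in a UFD), but as written it is a remark, not a proof.
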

\noindent
We introduce an important technique to deform a logical operator of cubic codes, which will prove the theorem. The technique depends on eq.\eqref{eq:generator_inversion_symmetry}. We say a Pauli operator is \emph{finite} if its support is bounded (\ie, finite set). We will prove that any finite logical operator is a product of finitely many stabilizer generators. This implies Theorem~\ref{thm:macroscopic_code_distance} by the following lemma:
 
\begin{lem}
Let $\mathcal{C}(L)$ be a translation-invariant local stabilizer code of interaction range $r > 1$ (\ie, each generator is contained in an $r^D$ hypercube) defined on a lattice $(\mathbb{Z}_L)^D$ with periodic boundary conditions, where $D$ is the dimension. Let $d=d(L)$ be the code distance of $\mathcal{C}(L)$. If there exists $L_0$ such that $d(L_0) < L_0 / (r-1)$, then there exists a finite logical operator in the infinite lattice that is not a product of finitely many stabilizer generators.
\label{lem:trivial_finite_lop_vs_d}
\end{lem}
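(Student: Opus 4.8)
The plan is to prove the lemma directly, so that its contrapositive yields the bound $d(L)\ge L/(r-1)$ needed for Theorem~\ref{thm:macroscopic_code_distance} (with $r=2$ this gives $d\ge L$). Assume $d(L_0) < L_0/(r-1)$ and let $\ell$ be a nontrivial logical operator on the torus $(\mathbb{Z}_{L_0})^D$ of minimal weight, so its weight equals $d(L_0)$. From $\ell$ I would manufacture a finite Pauli operator $\tilde\ell$ on the infinite lattice $\mathbb{Z}^D$ by ``unwrapping'' the torus, and then argue that (i) $\tilde\ell\in\mathcal{S}^\perp$ on $\mathbb{Z}^D$, and (ii) $\tilde\ell\notin\mathcal{S}$, \ie\ it is not a finite product of generators.

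The key quantitative step, and the only place the hypothesis is used, is to locate an empty slab to cut along. The support of $\ell$ meets at most $d(L_0)$ sites, so its projection $S_i$ onto each coordinate axis $i$ obeys $|S_i|\le d(L_0)$. Among the $L_0$ cyclic windows of $r-1$ consecutive values in $\mathbb{Z}_{L_0}$, every site of $S_i$ occupies at most $r-1$ of them, so at most $(r-1)\,d(L_0) < (r-1)\cdot L_0/(r-1)=L_0$ windows are occupied. Hence in each direction there is a window $G_i$ of $r-1$ consecutive hyperplanes meeting no site of $\supp(\ell)$. After an independent translation in each direction we may take $G_i=\{0,\dots,r-2\}$, so that $\supp(\ell)\subseteq\{r-1,\dots,L_0-1\}^D$.

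I would then define $\tilde\ell$ to be the Pauli operator on $\mathbb{Z}^D$ supported on the same set $\supp(\ell)\subseteq\{r-1,\dots,L_0-1\}^D$, now read as a subset of $\mathbb{Z}^D$, with identical single-site actions. The main obstacle is verifying (i): that $\tilde\ell$ commutes with every generator $\tilde g$ of the infinite lattice. If $\tilde g$'s $r^D$-box misses $\supp(\tilde\ell)$ the relation is trivial; any $\tilde g$ that meets the support must have its box coordinates $[b_i,b_i+r-1]$ with $b_i\in[0,L_0-1]$. When the box lies inside $\{0,\dots,L_0-1\}^D$ it reduces mod $L_0$ to a torus generator $g$ with identical overlap, so commutation follows from $\ell$ being logical. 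The delicate case is a box that straddles the cut, \ie\ some $b_i\ge L_0-r+1$: it then wraps on the torus into $[0,\,b_i+r-1-L_0]\subseteq\{0,\dots,r-2\}=G_i$, exactly the empty slab. Because $\supp(\ell)$ avoids $G_i$, the overlap of such a $\tilde g$ with $\tilde\ell$ falls entirely in the un-wrapped part of the box, where $\tilde g$ and its reduction $g$ act identically, giving $\lambda(\tilde\ell,\tilde g)=\lambda(\ell,g)=0$. This is precisely why the slab must have width $r-1$: it forces the wrapped portion of any interaction box into $G_i$.

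Finally, for (ii) I would invoke that reduction modulo $L_0$ is a homomorphism from finitely supported Pauli operators on $\mathbb{Z}^D$ (in the abelianized/symplectic picture) to those on $(\mathbb{Z}_{L_0})^D$, carrying each generator $\tilde g$ to a torus generator and carrying $\tilde\ell$ back to $\ell$, since $\supp(\tilde\ell)$ already lies in the fundamental domain without collisions. If $\tilde\ell$ were a finite product $\prod_k\tilde g_k$, applying this reduction would give $\ell=\prod_k g_k\in\mathcal{S}$ on the torus, contradicting that $\ell$ is a nontrivial logical operator of weight $d(L_0)>0$. As $\tilde\ell$ is finite, lies in $\mathcal{S}^\perp$, and is not a finite product of generators, it is the desired operator. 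I expect the straddling case of step~(i) to be the only genuine subtlety; the pigeonhole count and the mod-$L_0$ homomorphism are routine.
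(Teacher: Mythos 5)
Your proposal is correct, but it takes a genuinely different route from the paper's own proof. The paper first passes to a \emph{connected} nontrivial logical operator of weight $d(L_0)$ (using the path/connectedness notion it defines right after the lemma), observes that any closed path in its support, traced out on the torus $T^D$, has total length at most $d(L_0)(r-1) < L_0$ and is therefore a homologically trivial cycle, and then lifts the operator to the universal cover $\mathbb{R}^D$; the lift is bounded, still logical, and cannot be a finite product of generators since the original was nontrivial. You avoid connectedness and covering spaces entirely: your pigeonhole count produces, in each coordinate direction, a cyclic window of $r-1$ consecutive empty hyperplanes, and you cut the torus open along these slabs to embed $\supp(\ell)$ directly into $\mathbb{Z}^D$. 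Both mechanisms spend the hypothesis $d(L_0)(r-1)<L_0$ in equivalent ways (short paths cannot wind around the torus, versus occupied windows cannot exhaust all $L_0$ of them), but your version is more elementary and makes fully explicit the two verifications the paper leaves implicit: commutation with generators that straddle the cut (your empty-slab argument), and non-triviality of the unwrapped operator (your reduction-mod-$L_0$ homomorphism carrying infinite-lattice generators to torus generators and $\tilde\ell$ back to $\ell$). What the paper's route buys is economy—its connectedness machinery is reused throughout for deforming operators—while yours buys self-containedness and does not require the minimal-weight operator to be connected. One small point to add to your write-up: the hypothesis forces $L_0 > (r-1)\,d(L_0) \geq r-1$, hence $L_0 \geq r$, which is exactly what makes reduction mod $L_0$ injective on each $r^D$ box, so that each infinite-lattice generator reduces to precisely one torus generator rather than to a self-overlapping product.
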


We need a notion of connectedness:
\begin{defn}
A set of sites $ \{ p_1, p_2,\ldots, p_n \} $ is a \emph{path} joining $p_1$ and $p_n$ if for each pair $(p_i, p_{i+1})$ of consecutive sites there exists a stabilizer generator that acts nontrivially on the pair simultaneously, for $i=1,\ldots, n-1$. A set $M$ of sites is \emph{connected} if every pair of sites in $M$ are joined by a path in $M$. A \emph{connected Pauli operator} is a Pauli operator with connected support.
\end{defn}

\noindent
For example, $\{ (0,0,0), (1,0,0) \}$ is connected with respect to Code 0,1,2,3,4. $\{(0,0,0),(1,1,1)\}$ is connected with respect to Code 2, but not connected with respect to Code 0,1,3,4. See Fig.~\ref{fig:generator_Code_01234}. If a Pauli operator $O$ is logical then any of its connected component is logical. If $O$ is nontrivial, at least one of its connected components is nontrivial.

\begin{proof}(of Lemma~\ref{lem:trivial_finite_lop_vs_d})
Suppose $d = d(L_0) < L_0 / (r-1)$ for some $L_0$. There exists a connected nontrivial logical operator $O$ of support $M$, where the number of sites in $M$ is $d$. Given a closed path $\{ p_1, p_2,\ldots, p_n, p_{n+1}=p_1\} \subseteq M $, the union of the shortest line segments $c_i : [0,1] \to T^3$ connecting $c_i(0)=p_i$ and $c_i(1)=p_{i+1}$ is a trivial homological cycle of the $D$-torus $T^D \supset (\mathbb{Z}_{L_0})^D $. If it is not the case, since any nontrivial homological cycle of $T^D$ has length $L_0$, we must have $d (r-1) \geq L_0$.

Consider a lifting of all closed paths of $M$ into the universal covering $\mathbb{R}^D$ of $T^D$ via line segments $c_i$'s. Since any closed path in $M$ can always be express by a trivial homological cycle, the lifting is bounded. The corresponding lifting of $O$ is not a product of finitely many stabilizer generators, since it was not trivial.
\end{proof}
\noindent
The converse of Lemma~\ref{lem:trivial_finite_lop_vs_d} could be an interesting problem, since it, if true, implies that the code distance of translation-invariant local stabilizer code in periodic finite lattice is either $O(1)$ or $\Omega(L)$.

Consider a finite logical operator $O$. We will show that $O$ is a finite product of stabilizer generators, \ie, a trivial logical operator. We may assume $O$ is supported on a finite box $B \subseteq \Lambda$, where
\[
 B = \{ (x,y,z) ~|~ x_0 \le x \le x_1, ~y_0 \le y \le y_1,~ z_0 \le z \le z_1 \}.
\]

\begin{figure}
\centering
\includegraphics[width=0.37\textwidth]{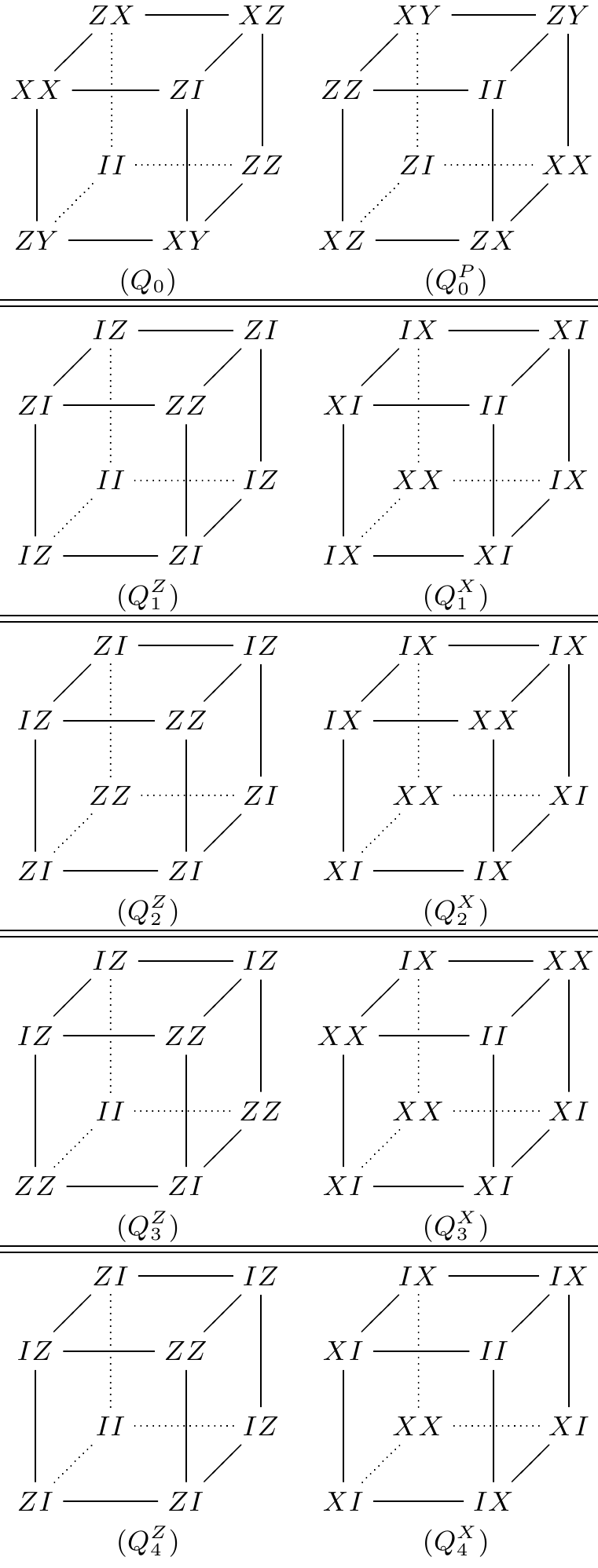}
\caption{Stabilizer generators for non-CSS Code 0, and CSS Code 1, 2, 3, and 4. They all have code distance $\ge L$ (Theorem~\ref{thm:macroscopic_code_distance}). The bottom four are free of string logical operators. See Section~\ref{section:logical_string_segments}.}
\label{fig:generator_Code_01234}
\end{figure}

We first deal with Code 0. The stabilizer generators are depicted in Fig.~\ref{fig:generator_Code_01234}.
Consider the vertex $v=(x_1,y_1,z_0)$ of $B$ that has largest $x$- and $y$-coordinate and the smallest $z$-coordinate. It must commute with $ZX$ of $Q_0$ and $XY$ of $Q_0^P$. Since $ZX = (1001)$ and $XY=(0111)$ are independent, the commutation gives two constraints on $v$. A possible $v$ is a linear combination of $ZX$ and $XY$. (Recall that Pauli group is abelianized to be a vector space.) If $v = II$, then we can shrink $B$, the support of $O$. If $v=ZX$, we can multiply $Q_0^P$ inside $B$ to make $v=II$, hence shrink $B$. If $v=XY$, then $Q_0$ inside $B$ will make $v=II$. If $v=YZ$, then the product $Q_0 Q_0^P$ inside $B$ will make $v=II$. In short, we have deformed the support $B$ of $O$ such that $B$ now consists of one less site. See the second figure of Fig.~\ref{fig:proof_macroscopic_distance}.

The process can be done arbitrarily many times as long as the deformed $B$ can contain a unit cube so that the multiplication by $Q_0$, $Q_0^P$ or both only affects the sites in $B$. Since we started with the finite box, we end up with a support consisted of three thin rectangles (the third of Fig.~\ref{fig:proof_macroscopic_distance}). To be precise, a thin rectangle $R_i$ perpendicular to $i$-axis means the set of sites
\begin{align*}
R_x &= \{ (x_0,y,z) ~|~ y_0 \le y \le y_1,~ z_0 \le z \le z_1 \} , \\
R_y &= \{ (x,y_0,z) ~|~ x_0 \le x \le x_1,~ z_0 \le z \le z_1 \} , \\
R_z &= \{ (x,y,z_1) ~|~ x_0 \le x \le x_1,~ y_0 \le y \le y_1 \} .
\end{align*}

Consider the vertex $v'= (x_1,y_0,z_0)$ of $R_y$ that is not contained in the other two thin rectangles. It must commute with $ZX, XZ$ of $Q$ and $XY, ZY$ of $Q^P$. Therefore, $v' = II$ (the fourth of Fig.~\ref{fig:proof_macroscopic_distance}). Continuing, we deduce that whole rectangle $R_y \setminus (R_x \cup R_z)$ is the identity. Note that this procedure was possible because we were able to find an edge that has ``sufficiently independent'' corner operators. We call an edge is \emph{good for erasing} if the argument above works.
Similarly, one can show that the other two rectangles $R_x$ and $R_z$ are also the identity. (The edge corresponding to $XX-ZX$ of $Q_0$ and $ZZ-XY$ of $Q_0^P$ is good for erasing, etc.) Thus, we have shown that by multiplying appropriate stabilizer generators inside $B$, we get the identity operator.

Secondly, let us show that Code 1 has macroscopic code distance.
It suffices to consider $X$-type logical operators. Let $B$ be a finite box that supports an $X$-type logical operator. Consider the vertex $v=(x_1,y_1,z_0)$ of $B$ that has largest $x$- and $y$-coordinate and the smallest $z$-coordinate. $v$ commutes with $IZ$ of $Q_1^Z$, and hence is either $II$ or $XI$. If $v=II$ we shrink $B$ by one site. If $v=XI$ we multiply $Q_1^X$ inside $B$ to erase $v$. We again end up with three thin rectangles. Consider the vertex $v'=(x_1,y_0,z_0)$ on the rectangle $R_y$ perpendicular to $y$-axis that is not contained in the other two rectangles. It commutes with $IZ, ZI$ of $Q_1^Z$. Therefore $v'=II$. Continuing, we erase $R_y$. Similarly, one can erase the other two rectangles.
Note that for CSS cubic codes, an edge is good for erasing if the corner operators of $Q^Z$ at the ends of the edge are independent.

This strategy is good enough to show that the code distance is macroscopic for Code 2, 3, and 4. We summarize the erasing procedure in Fig.~\ref{fig:proof_macroscopic_distance}.

\begin{figure}
\centering
\includegraphics[width=0.45\textwidth]{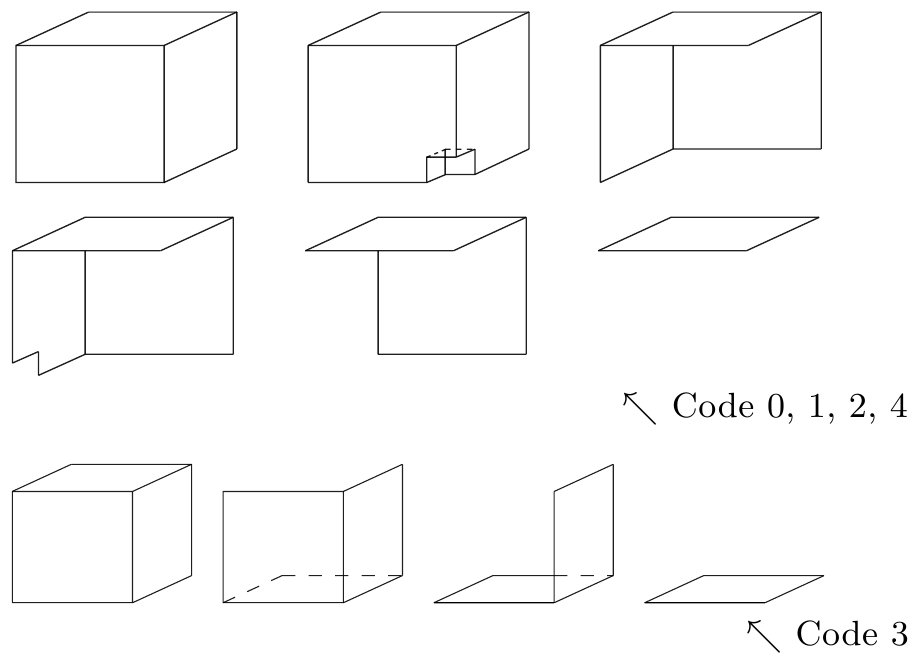}
\caption{Proof of macroscopic code distance. Deformation of a finite logical operator is depicted for each Code. For Code 1,2,3,4, the logical operator of $X$-type is considered.}
\label{fig:proof_macroscopic_distance}
\end{figure}

\section{Logical string segments}
\label{section:logical_string_segments}

A string logical operator might be regarded as a logical operator whose support is one dimensional. Indeed, the logical operators of some codes have definite topological structure. For the 2D Ising model or the toric code \cite{Kitaev2003Fault-tolerant, CastelnovoChamon2008Topological, DennisKitaevLandahlEtAl2002Topological}, the syndrome corresponding to a single site error has particular shape, by which we endow the distribution of Pauli operators with topological meaning. Concretely, $X$-error on 2D Ising model can be represented by a square in the dual lattice, and $Z$-error on 2D toric code by a link in the real lattice.

However, the topological meaning of an operator may not always be well-defined. The most important property of string logical operators would be that it can be extended in the infinite lattice to a arbitrarily long string with constant width. It is in fact the property that is used in no-go theorems on quantum memory based on stabilizer codes in 2D~\cite{BravyiTerhal2008no-go,KayColbeck2008Quantum,HaahPreskill2010Logical}. We capture this property of string logical operator in the definition as follows.

\begin{defn}
Let $\Omega_1, \Omega_2 \subset \Lambda$ be congruent cubes consisting of $w^3$ sites, and $O$ be a finite Pauli operator. A triple $\zeta=(O,\Omega_1,\Omega_2)$ is a \emph{logical string segment} if every stabilizer generator that acts trivially (by identity) on both $ \Omega_1$ and $\Omega_2 $, commutes with $O$. We call $\Omega_{1,2}$ the \emph{anchors} of $\zeta$. The \emph{directional vector} of $\zeta$ is the relative position of $\Omega_1$ to $\Omega_2$. The \emph{length} of $\zeta$ is the $l_1$-length of the directional vector, and the \emph{width} is $w$.
\end{defn}

Since $\Omega_1$ and $\Omega_2$ are congruent, the directional vector of a logical string segment is well-defined. If the directional vector is $(a,b,c)$, its $l_1$-length is $|a|+|b|+|c|$. As an example, a string operator of 2D toric code that creates a pair of vortex excitations is a logical string segment with anchors being the plaquettes carrying the vortex. Note that $O$ may not commute with all stabilizers that acts trivially on the anchors. For $O$ with $\supp(O)$ contained in the anchors is obviously a logical string segment. We need to exclude such a trivial case.

\begin{defn}
A logical string segment $\zeta=(O,\Omega_1,\Omega_2)$ is \emph{connected} if there exist two sites $p_1 \in \Omega_1, p_2 \in \Omega_2$ that can be joined by a path in $\supp(O) \cup \{p_1,p_2\}$, where $\supp(O)$ is the set of sites on which $O$ acts nontrivially. Two logical string segments $(O,\Omega_1,\Omega_2), (O',\Omega_1,\Omega_2)$ are \emph{equivalent} if $O'$ is obtained from $O$ by multiplying finitely many stabilizer generators. $\zeta$ is \emph{nontrivial} if every equivalent logical string segment is connected.
\end{defn}

For finite $w$, define $ \phi(w) $ to be the \emph{maximum length of all nontrivial logical string segments of width $w$}. $\phi$ is a non-decreasing function on the set of positive integers. Obvious from the definition is that $\phi(w) \geq 3(w-1)$ for any stabilizer code in 3D since any logical string segment with overlapping anchors is always connected, and hence nontrivial. We allow $\phi(w)$ to assume infinite value. For example, $\phi$ becomes infinite at small values of $w$ for 2D toric code~\cite{Kitaev2003Fault-tolerant}, 3D toric code~\cite{CastelnovoChamon2008Topological}, and Chamon model~\cite{Chamon2005Quantum,BravyiLeemhuisTerhal2010Topological}. String logical operator is defined as a logical operator in the infinite lattice containing an arbitrarily long nontrivial logical string segment.

\begin{defn}
A translation-invariant stabilizer code defined by a set of local stabilizer generators, 
is \emph{free of string logical operators} if the maximum length $\phi(w)$ of nontrivial logical string segment is finite for all finite $w$.
\end{defn}

2D Ising model is free of $X$-type string logical operators according to our definition. Consider an $X$-type logical string segment $\zeta=(O,\Omega_1,\Omega_2)$. Being finite, $O$ cannot be supported outside the anchors. Therefore, $\phi(w) = 2(w-1) < \infty$.
 
\begin{theorem}
Code 1, 2, 3, and 4 are free of string logical operators.
\label{thm:Code1234are_free_of_strings}
\end{theorem}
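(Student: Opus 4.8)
The plan is to show that for each fixed width $w$ there is a length bound $\ell(w)$ beyond which every logical string segment can be deformed to a disconnected operator, so that $\phi(w) \le \ell(w) < \infty$. By the duality of eq.~\eqref{eq:generator_inversion_symmetry} it suffices to treat $X$-type segments, whose defining constraint is exactly that $O$ commutes with every $Z$-type generator disjoint from the two anchors. First I would fix coordinates so that the directional vector of $\zeta$ has its largest component along, say, the $z$-axis, and slice $\supp(O)$ by the planes $z=k$. Because each $Z$-generator occupies a single unit cube, commutation of $O$ with the bulk checks (those meeting neither $\Omega_1$ nor $\Omega_2$) is a translation-invariant local condition coupling only the two consecutive slices $z=k$ and $z=k+1$. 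Encoding each slice as an $X$-configuration on the $xy$-plane, these conditions assemble into a transfer-type recursion determined solely by the $Z$-corner operators of the code.

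The second step is to import the erasing technique of Section~\ref{section:macroscopic_code_distance} into the bulk. Away from the anchors $O$ behaves exactly as a finite logical operator, so by multiplying $X$-type generators supported strictly between the anchors I would deform $O$ into a canonical thin form, collapsing the cross-section in each bulk slab down to residue classes surviving modulo the image of the local stabilizer map. Since the freedom to multiply generators is precisely the addition of elements of the $X$-stabilizer module, after this reduction each slice $P_k$ represents an element of a fixed finite-dimensional quotient, and consecutive reduced slices are still tied by the transfer recursion of Step~1.

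The core of the argument is the spreading property of this recursion. For Codes 1--4 the induced in-plane transfer is multiplication by a polynomial over $\mathbb{F}_2[x^{\pm},y^{\pm}]$ of fractal type, so iterating it strictly enlarges the support of any nonzero bounded pattern. I would make this quantitative: a nonzero reduced $P_k$ whose support fits in a $w \times w$ window cannot be propagated more than $O(w)$ steps in $z$ before its support must exceed width $w$, contradicting the hypothesis that the whole segment has width $w$. Hence in any sufficiently long segment there is a bulk slab on which the reduced $O$ must vanish; restoring the generators used in Step~2 then produces an equivalent operator whose support splits into a piece meeting $\Omega_1$ and a piece meeting $\Omega_2$. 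Such a segment is disconnected, hence trivial, and therefore $\phi(w)$ is finite.

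I expect the spreading estimate of Step~3 to be the main obstacle, and it is where the specific corner operators enter. One must verify, for each of the four codes (or by a uniform argument on the stabilizer polynomials), that the transfer polynomial is not a unit and that the associated module contains no nonzero element whose support stays inside a bounded window along $z$; this is a genuine strengthening of Condition~4, which only rules out exactly periodic (period-one) patterns, to arbitrary bounded but possibly aperiodic ones. Additional care is needed when the directional vector is not aligned with a coordinate axis: there I would either run the slab argument along the dominant axis of the vector, or re-slice by the family of planes transverse to the direction, in each case reducing back to the same transfer analysis.
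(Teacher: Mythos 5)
Your overall architecture --- restrict to $X$-type segments by duality, reduce to a thin canonical form by the erasing technique of Section~\ref{section:macroscopic_code_distance}, then analyze how a bounded cross-section can propagate along the long axis --- parallels the paper's proof. But the step carrying the entire mathematical content of the theorem, your Step~3 ``spreading property,'' is asserted rather than proved, and you acknowledge this yourself. This cannot be deferred as a routine verification: the codes marked $\dagger$ in Table~\ref{tb:list_codes} satisfy Conditions 1--4 just as Codes 1--4 do, yet they \emph{do} possess string logical operators (Appendix~\ref{section:codes_w_string_logical_operators}), so everything hinges on the code-specific computation you leave open. Worse, the spreading principle as you state it is false in general: over $\mathbb{F}_2$, multiplication by a fractal-type polynomial does not strictly enlarge supports --- for example $(1+x)\cdot(1+x)^{2^k-1}=1+x^{2^k}$ collapses a support of size $2^k$ to size $2$; only the diameter of the support grows monotonically. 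Moreover the inter-slice condition is not multiplication by a single Laurent polynomial but a relation of the form $\alpha(x,y)\,P_k=\beta(x,y)\,P_{k+1}$ with non-invertible polynomial matrices, so a slice neither determines nor is guaranteed to admit a successor, and the propagation analysis must be done by hand from the corner operators. Your quantitative guess of $O(w)$ propagation is also more optimistic than what the paper itself can prove: for Code 2 the established bound is $\phi_2(w)\le 5w+2^{3w+1}$, exponential in $w$.

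For contrast, the paper fills exactly this hole with a two-pronged combinatorial argument after reducing to flat, axis-aligned segments (a reduction which itself needs care for Codes 3 and 4, whose generators have faces lacking two orthogonal good edges for erasing; see Appendix~\ref{section:pf_thm_code_34}). In some directions the two-site constraints are ``confusing'': every admissible sequence is forced to the identity within a bounded number of sites. In the remaining directions (e.g.\ $\zeta^{(2)}_y$) it derives a contradiction between the period $3$ forced on the far edge by the direct constraints and the power-of-two quasi-period $2^{3w-2}$ forced by a row-by-row recursion across the cross-section (Lemmas~\ref{lem:doubling-quasi-period} and~\ref{lem:double-doubled-quasi-period} together with Remark~\ref{rem:properties_quasi_period}). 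The honest statement of what your Step~3 requires is: the solution space of the commutation constraints contains no nonzero element whose support lies in a strip of bounded cross-section and unbounded extent, verified from the explicit corner operators of each code --- which is precisely what the confusing-constraint and quasi-period arguments accomplish, and what your proposal does not.
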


\noindent
We will argue that if the length of a logical string segment is sufficiently larger than its width, 
then there exists an equivalent logical string segment that is disconnected.
The rest of this section is the proof of Theorem~\ref{thm:Code1234are_free_of_strings}.

\subsection{Reduction to flat segments}

Let $\zeta$ be a logical string segment of width $w$. Using the technique introduced in Section~\ref{section:macroscopic_code_distance}, we will deform $\zeta$ such that it is a union of at most three logical string segments whose directional vectors are $(a,0,0),(0,b,0),(0,0,c)$ respectively; $(a,b,c)$ is the directional vector of $\zeta$. We assume $\zeta$ is of $X$-type.

Consider $\zeta$ of Code 1 such that $ a,b,c > 0 $.
Since the support of $\zeta$ is finite, we can shrink it by multiplying stabilizer generators until $\zeta$ is contained in the smallest box $B$ containing the anchors of $\zeta$. Note that this is possible due to a special property of Code 1. Namely, using the technique of Section~\ref{section:macroscopic_code_distance}, one can shrink the size ($>1$) of a finite logical operator by $1$ from any direction, since there are two orthogonal edges that are good for erasing on each of six faces.

\begin{figure}
\centering
\includegraphics[width=0.45\textwidth]{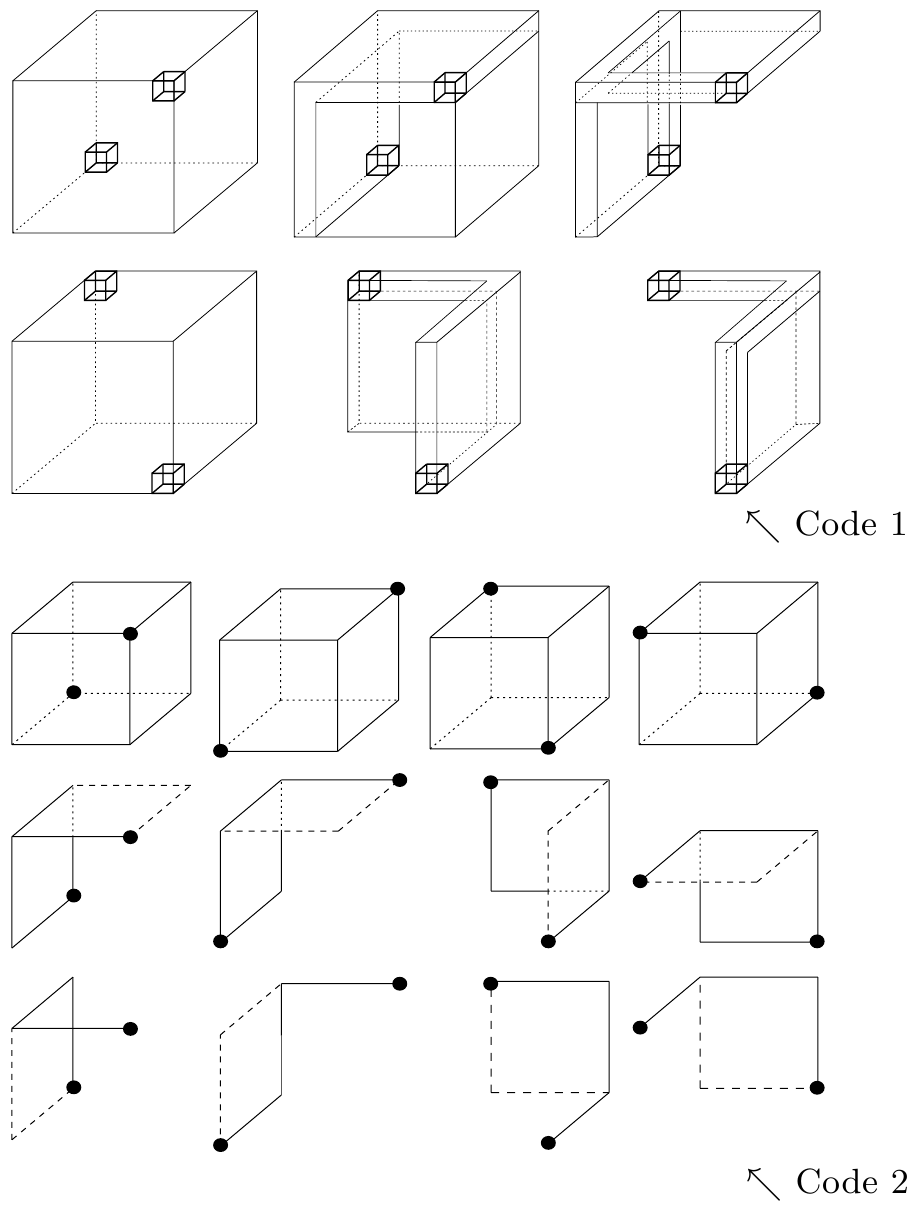}
\caption{Deformations of $X$-type logical string segments of Code 1 and 2. The small cubes and the filled dots are the anchors. The good edges for erasing are easily identified from the figures.}
\label{fig:reduction_flat_ss_code_12}
\end{figure}

Now the two anchors are located on $(0,0,0)$ (after shift of the origin) and $(a,b,c)$, and the $B$ is of size $ (a+w) \times (b+w) \times (c+w) $. Using the fact that $ZI-IZ$ along $x$-axis of $Q_1^Z$ (Fig.~\ref{fig:generator_Code_01234}) has two independent corner operator, we shrink $B$ as shown in the second figure in the first row of Fig.~\ref{fig:reduction_flat_ss_code_12}. Similarly, one shrinks $B$ further using $IZ-ZI$ (read rightward) along $y$-axis and $ZI-IZ$ (read downward) along $z$-axis. The initial $\zeta$ has been deformed such that it is a union of three logical string segments that are parallel to the coordinate axes. One can easily extend the argument to the case where $ a,b,c \geq 0 $.

Observe that Code 1 has three-fold rotational symmetry about $(1,1,1)$-axis. Therefore, we only need to consider one more case where $ a ,b \geq 0 $ and $ c \leq 0 $. The strategy is the same as before. We finish the reduction of logical string segments of Code 1 by drawing the second row of Fig.~\ref{fig:reduction_flat_ss_code_12}.

Code 2, 3, and 4 are treated similarly. For Code 2, the initial reduction of an arbitrary logical string segment to the smallest box that contains the two anchors is possible because there is a pair of orthogonal edges that are good for erasing on each of six faces of $Q_2^Z$. The subsequent reduction to flat segments is depicted in Fig.~\ref{fig:reduction_flat_ss_code_12}. Code 3 and 4 needs more explanation. See Appendix~\ref{section:pf_thm_code_34}.

Next, we show that each flat logical string segment is equivalent to a disconnected one if it is long relative to its width.

\subsection{Confusing constraints}

Consider an $X$-type logical string segment $\zeta^{(1)}_y=(O,\Omega_1,\Omega_2)$ of width $w$ of Code 1 whose directional vector is $(0,l,0)$, where $l$ is the length of $\zeta^{(1)}_y$, and $O$ is supported on the smallest box $B$ that contains $\Omega_1,\Omega_2$. Place $Q_1^Z$ such that it touches exactly two consecutive sites $p_1 = (x,y,z),~ p_2 = (x,y+1,z) \in B \setminus (\Omega_1 \cup \Omega_2)$ where $x$ is the largest and $z$ the smallest. $Q_1^Z$ acts on $p_1 - p_2$ by $IZ-ZI$, which gives a constraint on possible Pauli operators on $p_1 - p_2$ since $O$ is commuting with $Q_1^Z$. Explicitly, $p_1 - p_2$ is a linear combination (with coefficients in $\mathbb{F}_2$) of $II-IX$, $XI-II$, and $IX-XI$. We observe that given any possible operators on $p_1 - p_2$ one can make them $II-II$ by multiplying $Q_1^X$'s inside $B$. It is an important and common property of a good edge for erasing (currently, it is $IZ-ZI$ of $Q_1^Z$), which is derived from eq.\eqref{eq:generator_inversion_symmetry}. It is a variant of the technique described in Section~\ref{section:macroscopic_code_distance}.

\begin{figure}
\centering
\includegraphics[width=0.4\textwidth]{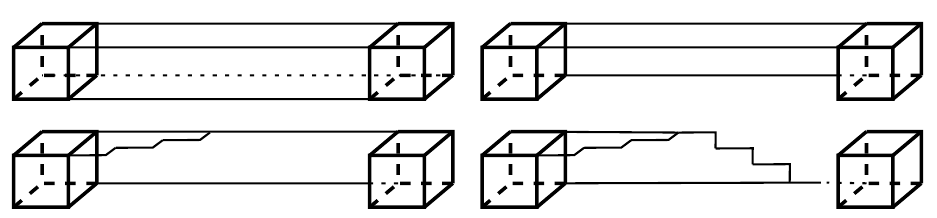}
\caption{Deformation of flat logical string segement $\zeta^{(1)}_y$. The bottom two figures are implied by `confusing constraints'. If $\zeta^{(1)}_y$ is long compared to its width, it is equivalent to a disconnected one.}
\label{fig:deformation_flat_ss_y_code_1}
\end{figure}

We shrink the support $B$ of $O$ such that now $B$ consists of the anchors plus two thin rectangles $R_z,R_x$ where $R_i$ is perpendicular to $i$-axis as shown in the top right figure of Fig.~\ref{fig:deformation_flat_ss_y_code_1}. Let us calculate what Pauli operators are possible along the edge $e$ of $R_z$ that is not contained in $R_y$. Any consecutive pair of two points on $e$ away from the anchor commute with $IZ-ZI$ and $II-IZ$ of $Q_1^Z$. Hence, the pair must be a linear combination of $XI-II$ and $IX-XI$. Since the base field is binary, there are only four combinations:
\begin{align*}
 II&-II &
 XI&-II \\
 IX&-XI &
 XX&-XI 
\end{align*}
The edge $e$ is a consistent sequence of such pairs. Going to the right (increasing $y$-coordinate), $II$ must be followed by $II$, $XI$ by $II$, $IX$ by $XI$ which must be followed by $II$, and $XX$ by $XI$ which must be followed by $II$. That is, $e$ is the identity except possibly 2 sites near the left anchor! We say such a constraint is \emph{confusing} whose solution is eventually $II$. We can repeat the argument on the next `line' $e'$ that has 1 smaller $x$-coordinate than $e$, to deduce that $e'$ is the identity except possibly 4 sites. Inductively, we conclude that $R_z$ is the identity except possibly a small `triangle' adjacent to the left anchor. See the bottom left figure of Fig.~\ref{fig:deformation_flat_ss_y_code_1}

We do the same calculation for $R_x$ with constraints given by $ IZ-ZI $ and $ II - IZ $ from the bottom face of $Q_1^Z$. The algebra is the same. Summarizing, we have shown that if $l > w + 2( 2w-1) = 5w -2$, then $\zeta^{(1)}_y$ is equivalent to a disconnected one. Note that Code 1 has three-fold rotational symmetry $\hat{x} \to \hat{y} \to \hat {z} \to \hat{x}$. Therefore, the maximum length $\phi_1(w)$ of nontrivial logical string segment of Code 1 satisfies
\[
 \phi_1(w) \leq 15w - 6 < \infty ,
\]
which completes the proof of Theorem~\ref{thm:Code1234are_free_of_strings} for Code 1.

\subsection{Inconsistent quasi-period}

Code 2 exhibits no three-fold symmetry, but instead two-fold symmetry ($\hat{x} \leftrightarrow \hat{y}$) about the plane of normal vector $(1,-1,0)$. Therefore, it suffices to consider two logical string segments $\zeta_y$ along $y$-axis and $\zeta_z$ along $z$-axis.

Let $(0,0,l_z)$ be the directional vector of $\zeta^{(2)}_z=(O_z,\Omega_{1z}, \Omega_{2z})$ pertaining to Code 2, where $O_z$ is supported on the smallest box that contains the two anchors. Using the edge $ZZ-ZI$ (read downward) of $Q_2^Z$, which is good for erasing, we further deform $O$ similar to Fig.~\ref{fig:deformation_flat_ss_y_code_1}. The support $B$ of $O_z$ is now the union of two rectangles $R_x$, $R_y$. $Q_2^Z$ acts on the edge $e$ of $R_x$ that is not contained in $R_y$ by $ZZ-ZI$ and $IZ-ZI$ (read downward). This is a confusing constraint because any two consecutive sites on $e$ must be one of
\begin{align*}
 II &- II & &
 IX - XI \\
 II &- IX & &
 IX - XX .
\end{align*}
Therefore, $e$ is the identity possibly except two sites near the bottom anchor. The same inference is applicable to $R_y$ due to the two-fold symmetry. The length $l_z$ of $\zeta^{(2)}_z$ satisfies $l_z \leq w + 2w = 3w$ if $\zeta^{(2)}_z$ is nontrivial.

Let $(0,l_y,0)$ be the directional vector of $\zeta^{(2)}_y=(O_y,\Omega_{1y},\Omega_{2y})$. Using the edge $ZZ-ZI$ of $Q_2^Z$ along $y$-axis, which is good for erasing, we may assume that $O_y$ is supported on the union of two rectangles $R_z, R_x$ as in Fig.~\ref{fig:deformation_flat_ss_y_code_2}. The constraints on the outer edges $e_1$ of $R_x$ and $e_2$ of $R_z$ ($e_2$ has bigger $x$-coordinate than $e_1$) given by $Q_2^Z$ are not confusing. We need a different argument.

\begin{figure}
\centering
\includegraphics[width=0.4\textwidth]{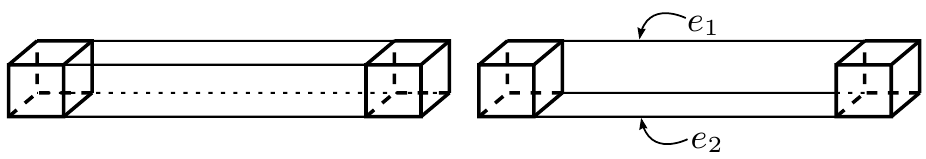}
\caption{Deformation of flat logical string segement $\zeta^{(2)}_y$.}
\label{fig:deformation_flat_ss_y_code_2}
\end{figure}

\begin{defn}
A function $f$ on the positive integers (\ie, a sequence) is \emph{eventually periodic} with \emph{quasi-period} $t \geq 1$ and \emph{offset} $n_0 \geq 0$ if $f(n+t) = f(n)$ for all $n > n_0$. The \emph{period} of $f$ is the smallest quasi-period.
\end{defn}

\noindent
If $t_1$ and $t_2$ are quasi-periods of $f$, then $f(n) = f( n + i t_1 + j t_2 )$ for sufficiently large $n$ where $ i, j $ are independent of $n$. Since there exist $i,j$ such that $i t_1 + j t_2 = t = \gcd(t_1,t_2)$, $t$ is also quasi-period. Therefore,
\begin{remark}
The period divides any quasi-period. Note also that if two sequences $f_1$ and $f_2$ have different periods $t_1, t_2$ and offsets $n_1,n_2$ respectively, there exists $n' \leq \max ( n_1, n_2 ) + \mathrm{lcm} ( t_1, t_2 )$ such that $f_1(n') \neq f_2(n')$.
\label{rem:properties_quasi_period}
\end{remark}

Supposing $\zeta^{(2)}_y$ is nontrivial and arbitrarily long, we will derive a contradiction: \emph{The operators on $e_2$ has period 3 and quasi-period a power of 2.}

The constraints on a pair of sites of $e_2$ are $ZI-IZ$ and $ZZ-ZI$. Hence, the pair is one of
\begin{align}
 II &- II & &
 XI - XX \notag \\
 IX &- XI & &
 XX - IX . \label{eq:sol_two_site_constraint1}
\end{align}
The only possible infinite sequence on $e_2$ is thus $\cdots - XI - XX - IX - XI - \cdots$, whose period is 3, or $\cdots - II - II - \cdots$. If $e_2$ is the identity, then we consider next rows of $R_z$ until we get a nontrivial row. If the entire $R_z$ is the identity, then $Q_2^Z$ imposes constraints on lower edge of $R_x$, $ZI-IZ$ and $IZ-ZZ$, whose solutions are given by eq.\eqref{eq:sol_two_site_constraint1}.
The period 3 is again revealed if nontrivial. Consider constraints on $e_1$, or more generally, on four sites of $R_x$ that form a square.
\begin{center}
\begin{tabular}{c}
$\xymatrix{
                  &  & a_{i-1,j+1} \ar@{-}[d] \ar@{-}[r] & a_{i-1,j}\\
\ar[u]^z \ar[r]^y &  & a_{i,j+1}   & a_{i,j} \ar@{-}[u] \ar@{-}[l] 
}$ \\
$ \xymatrix@!0{
 IZ \ar@{-}[d] \ar@{-}[r]& ZZ & & ZI\ar@{-}[d] \ar@{-}[r] & IZ \\
 ZI & ZI\ar@{-}[u] \ar@{-}[l]  & & ZZ & ZI\ar@{-}[u] \ar@{-}[l] 
}$
\end{tabular}
\end{center}
Here we denoted each site operator as an element of a vector-valued matrix $a$. Note that, for example, $XI$ and $IZ$ is commuting because $(10)(01)^T = 0$. Similarly, the fact that $R_x$ is commuting with $Q_2^Z$ can be expressed by a system of homogeneous equations:
\begin{align*}
 &\begin{pmatrix} 0 & 1 \\ 1 & 0 \end{pmatrix} a_{i-1,j+1}
+\begin{pmatrix} 1 & 1 \\ 0 & 1 \end{pmatrix} a_{i-1,j}  \\
+&\begin{pmatrix} 1 & 0 \\ 1 & 1 \end{pmatrix} a_{i,j+1}  
+\begin{pmatrix} 1 & 0 \\ 1 & 0 \end{pmatrix} a_{i,j}    = 0,
\end{align*}
or
\begin{equation}
 a_{i,j+1} = \begin{pmatrix} 1 & 0 \\ 0 & 0 \end{pmatrix} a_{i,j} + b( a_{i-1,j+1} , a_{i-1,j} ).
\label{eq:code-2-Rx-recursive}
\end{equation}
In order to find possible operators on $e_1$, we set $a_{0,j'}= 0$ for all $j'$. Then,
\[
 a_{1,j+1} = \begin{pmatrix} 1 & 0 \\ 0 & 0 \end{pmatrix} a_{1,j}.
\]
It is obvious that $\{ a_{1,j} \}_j$ is eventually periodic with period 1. For the rows with bigger $i$, the following holds:
\begin{lem}
Suppose a set $\{ a_{i,j} \}_{i,j}$ of vectors over the binary field $\mathbb{F}_2$ satisfies
\[
 a_{i,j+1} = M a_{i,j} + b( a_{i-1,j+1} , a_{i-1,j} ),
\]
where $M^2 = M$. If $\{ a_{i-1,j} \}_{j \geq 1}$ is eventually periodic with quasi-period $t$ and offset $n_0$, then so is $\{ a_{i,j} \}_{j \geq 1}$ with quasi-period $2t$ and offset $n_0 + t$.
\label{lem:doubling-quasi-period}
\end{lem}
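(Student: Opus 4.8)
The plan is to prove the lemma by induction on the column index $j$, treating the recursion
\[
 a_{i,j+1} = M a_{i,j} + b( a_{i-1,j+1} , a_{i-1,j} )
\]
as a driven linear recursion in $j$ whose inhomogeneous term $b(a_{i-1,j+1},a_{i-1,j})$ is, by hypothesis, eventually periodic with quasi-period $t$ and offset $n_0$. The key structural fact I would exploit is the idempotence $M^2=M$: over $\mathbb{F}_2$ this means $M$ is a projection, so its only eigenvalues are $0$ and $1$, and the homogeneous part of the recursion never amplifies or mixes in a way that breaks eventual periodicity. Concretely, I would split the space into $\mathrm{im}(M)$, where $M$ acts as the identity, and $\ker(M)=\mathrm{im}(I-M)$, where $M$ acts as zero, and analyze the two components of $a_{i,j}$ separately.

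First I would set $c_j = b(a_{i-1,j+1},a_{i-1,j})$ and observe that since $\{a_{i-1,j}\}$ is eventually periodic with quasi-period $t$ and offset $n_0$, the driving sequence $c_j$ is eventually periodic with the same quasi-period $t$ and offset $n_0$ as well (a shift by one index and a fixed linear map preserve eventual periodicity). Next I would solve the recursion explicitly by iterating: for $n > n_0$,
\[
 a_{i,n+1} = M^{n+1-n_0}\, a_{i,n_0} + \sum_{k=n_0}^{n} M^{\,n-k} c_k .
\]
Because $M$ is idempotent, $M^{\,p}=M$ for every $p\ge 1$ and $M^0=I$, so each summand collapses: the tail of this sum is $M\bigl(\sum_{k=n_0}^{n-1} c_k\bigr) + c_n$. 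Thus the behavior of $a_{i,n}$ is governed by the running partial sums of the eventually periodic sequence $c_k$, pushed through the projection $M$.

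The main obstacle — and the reason the quasi-period \emph{doubles} to $2t$ rather than staying $t$ — is precisely this partial-sum term $M\sum_{k} c_k$. A sum over one period of $c_k$ need not vanish; if the period-$t$ block of $c$ sums (under $M$) to a nonzero vector $\delta$, then advancing $j$ by $t$ adds $\delta$ to the projected component each time, so the sequence is not yet periodic with quasi-period $t$. The resolution is that over $\mathbb{F}_2$ we have $2\delta = 0$: advancing by $2t$ adds $\delta+\delta = 0$, so the accumulated drift cancels after two periods. This is the heart of the argument, and I expect it to require the most care in bookkeeping. I would verify that $\sum_{k=n_0}^{n+2t-1} c_k \equiv \sum_{k=n_0}^{n-1} c_k$ under $M$ for all sufficiently large $n$, which forces $a_{i,n+2t}=a_{i,n}$ once $n$ exceeds the new offset. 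Finally I would track the offset: the convolution with the eventual-periodicity hypothesis on row $i-1$ (valid only for indices above $n_0$) and the one-step lag in the driving term mean the conclusion holds for $n > n_0+t$, giving offset $n_0+t$ as claimed. This completes the induction step, and the base case $i=1$ (with $a_{0,j}=0$, hence $c_j=b(0,0)$ constant, giving period $1$) was already established in the text.
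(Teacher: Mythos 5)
Your proposal is correct, but it is organized differently from the paper's proof. The paper first proves the special case of a \emph{constant} driving term: if $c_{j+1} = M c_j + c'$ with $c'$ fixed, then $c_{j+3} = M c_{j+1} + Mc' + c' = M c_j + 2Mc' + c' = c_{j+1}$, so such a sequence has quasi-period $2$ and offset $1$; it then reduces the general statement to this case by \emph{subsampling}, setting $c_{i,j} = a_{i,\,n_0+1+(j-1)t}$, using idempotence to get $M^t = M$ and the eventual $t$-periodicity of row $i-1$ to make the aggregated driving term of the subsampled recursion independent of $j$, and finally recovering the intermediate indices because they are determined by the subsampled values together with row $i-1$. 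You instead solve the recursion in closed form, $a_{i,n+1} = M a_{i,n_0} + M\bigl(\sum_{k=n_0}^{n-1} c_k\bigr) + c_n$, and observe that the increment of the partial sum over $2t$ steps is $\sum_{k=n}^{n+t-1}(c_k + c_{k+t}) = 0$ over $\mathbb{F}_2$ for $n > n_0$. Both arguments rest on exactly the same two facts --- idempotence collapses the propagator, and characteristic $2$ kills the accumulated drift after two periods --- so the mechanism is identical, but your variation-of-constants route is more direct and in fact yields the slightly stronger offset $n_0+1$ (which implies the claimed offset $n_0+t$). What the paper's organization buys is reusability: immediately afterwards it needs the companion statement for $N^2 = I$ (Lemma~4, quasi-period $4t$), and there the collapse $N^p = N$ fails since powers of $N$ alternate, so your closed-form sum no longer simplifies uniformly and would require extra parity bookkeeping, whereas the paper only has to redo the short constant-driving-term computation with $f_{j+1} = N f_j + f'$. (Two cosmetic points: your opening frame of ``induction on $j$'' and a ``base case $i=1$'' is not really what your argument does --- it is a direct closed-form computation, and the lemma is a single conditional step needing no base case; and the $\mathrm{im}(M)\oplus\ker(M)$ decomposition you announce is never actually used.)
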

\begin{proof}
Consider a sequence $\{ c_j \}_{j \geq 1}$ given by
\[
 c_{j+1} = M c_j + c' ,
\]
where $c'$ is a constant 2-component vector over $\mathbb{F}_2$ independent of $j$.
\begin{align*}
 c_{j+3}
&= M c_{j+2} + c' = M (M c_{j+1} +c') + c'\\
&= M c_{j+1} + M c' + c' = M c_j + M c' + M c' + c' \\
&= M c_j + c' = c_{j+1}
\end{align*}
Therefore, $\{ c_j \}_{j \geq 1}$ is eventually periodic with quasi-period 2 and offset 1.
Define
\[
 c_{i,j} \equiv a_{i, n_0+1+(j-1)t}
\]
where $j \geq 1$.
Clearly, $c_{i-1,j+1} = a_{i-1, n_0+1+jt} = a_{i-1,n_0+1} = c_{i-1,1}$, \ie, $\{ c_{i-1,j} \}_{j \geq 1} $ has period 1 with offset 0. Moreover, $\{ c_{i,j} \} _{i,j} $ satisfies
\[
 c_{i,j+1} = M c_{i,j} + c'_{i,j},
\]
where $c'_{i,j}=c'( a_{i-1,j'} ; n_0+1+(j-1)t \leq j' \leq n_0 + 1 + j t )$ does not depend on $j$ because $\{ a_{i-1,j} \}_j$ is eventually periodic with quasi-period $t$ and offset $n_0$. Therefore, $\{ c_{i,j} \}_j$ is eventually periodic with quasi-period 2 and offset 1. Since $a_{i,n_0+h+(j-1)t} (1 \leq h < t)$ is determined by $c_{i,j}$ and $\{ a_{i-1,n_0+h'+(j-1)t} \}_{1 \leq h' \leq h}$, we get the claim.
\end{proof}

We have shown that each row of $R_x$ that is not contained in $R_z$ is eventually periodic, and as we decrease by 1 the $z$-coordinate of a row of $R_x$ we get its quasi-period doubled. Since the initial row has period 1, the quasi-period of each row is some power of 2. We need to show the same thing continues to hold as we move toward $e_2$. The constraints pertaining to the last edge of $R_x$, \ie, the intersection of $R_x$ and $R_z$, are
\[
 \xymatrix@!0{
 IZ \ar@{-}[d] \ar@{-}[r]& ZZ & & II\ar@{-}[d] \ar@{-}[r] & II \\
 ZI & ZI\ar@{-}[u] \ar@{-}[l]  & & IZ & ZZ . \ar@{-}[u] \ar@{-}[l] 
}
\]
The recursive equation is then
 \begin{equation}
 a_{w,j+1} = \begin{pmatrix} 1 & 0 \\ 1 & 1 \end{pmatrix} a_{w,j} + b( a_{w-1,j+1} , a_{w-1,j} ).
\label{eq:code-2-Rx-Rz-recursive}
\end{equation}
One can repeat the proof of Lemma~\ref{lem:doubling-quasi-period}, except that now one has to show that a sequence defined by
\[
 f_{j+1} = N f_j + f' ,
\]
where $N^2 = I$ and $f'$ is a constant vector, is always eventually periodic with quasi-period some power of 2. This is easy:
\begin{align*}
 f_{j+2} 
&= N ( N f_j + f' ) + f' = f_j + f'' \\
 f_{j+4}
&= f_{j+2} + f'' = f_j + f'' + f'' = f_j.
\end{align*}
\noindent
We have proved
\begin{lem}
Suppose a set $\{ a_{i,j} \}_{i,j}$ of vectors over the binary field $\mathbb{F}_2$ satisfies
\[
  a_{i,j+1} = N a_{i,j} + b( a_{i-1,j+1} , a_{i-1,j} ).
\]
where $N^2 = I$. If $\{ a_{i-1,j} \}_{j \geq 1}$ is eventually periodic with quasi-period $t$ and offset $n_0$, then so is $\{ a_{i,j} \}_{j \geq 1}$ with quasi-period $4t$ and offset $n_0$.
\label{lem:double-doubled-quasi-period}
\end{lem}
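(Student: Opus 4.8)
The plan is to run the proof of Lemma~\ref{lem:doubling-quasi-period} essentially verbatim, replacing the idempotent $M$ there by the involution $N$ here, so that the only genuinely new ingredient is the elementary decimation computation already carried out immediately above the statement: a recursion $f_{j+1}=Nf_j+f'$ with $N^2=I$ and $f'$ constant satisfies $f_{j+4}=f_j$. Everything else is bookkeeping of how the eventual periodicity of row $i-1$ propagates to row $i$.

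First I would decimate row $i$ at spacing $t$. Setting $c_{i,j}\equiv a_{i,\,n_0+1+(j-1)t}$ and iterating the given recursion $t$ times, one obtains a one-block relation $c_{i,j+1}=N^t c_{i,j}+c'_{i,j}$, where $c'_{i,j}=\sum_{s=1}^{t} N^{t-s}\,b(a_{i-1,m+s},a_{i-1,m+s-1})$ with $m=n_0+1+(j-1)t$ is the accumulated inhomogeneous contribution of the row-$(i-1)$ entries in the $j$-th block. Because $\{a_{i-1,j}\}_{j\ge1}$ is eventually periodic with quasi-period $t$ and offset $n_0$, every block starting at an index $>n_0$ carries the same row-$(i-1)$ pattern; since $n_0+1+(j-1)t>n_0$ already for $j\ge1$, the term $c'_{i,j}=c'$ is in fact constant for all $j\ge1$. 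The crucial observation is that $N^t$ is again an involution, $(N^t)^2=N^{2t}=(N^2)^t=I$, so the decimation computation applies with $N$ replaced by $N^t$ and gives $c_{i,j+4}=c_{i,j}$ for all $j\ge1$; that is, $\{c_{i,j}\}_{j\ge1}$ has quasi-period $4$ and offset $0$.

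Next I would lift this back to the undecimated sequence. For each residue $h$ with $1\le h\le t$, the entry $a_{i,\,n_0+h+(j-1)t}$ is obtained from the block anchor $c_{i,j}$ by applying the recursion $h$ times, feeding in the row-$(i-1)$ entries $a_{i-1,\,n_0+h'+(j-1)t}$ for $1\le h'\le h$. Those inputs equal $a_{i-1,\,n_0+h'}$ for every $j\ge1$ (period $t$, all indices $>n_0$), hence are constant in $j$, while the anchor $c_{i,j}$ is $4$-periodic in $j$; since the map producing $a_{i,\,n_0+h+(j-1)t}$ from these inputs does not depend on $j$, the entry is $4$-periodic in $j$ for each fixed $h$. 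As $h$ ranges over a full block, $m=n_0+h+(j-1)t$ ranges over all indices $>n_0$, so this says exactly $a_{i,m+4t}=a_{i,m}$ for all $m>n_0$, \ie\ $\{a_{i,j}\}_{j\ge1}$ is eventually periodic with quasi-period $4t$ and offset $n_0$, as claimed.

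I expect the only delicate point to be the offset bookkeeping in the lifting step---confirming that the quasi-period is exactly $4t$ and that the offset collapses all the way to $n_0$, in contrast to the $n_0+t$ appearing in Lemma~\ref{lem:doubling-quasi-period}. The reason it improves here is that the homogeneous-plus-constant recursion closes up after four steps with no offset (the identity $f_{j+4}=f_j$ holds from $j=1$), and every sampled block already sits beyond the offset $n_0$ of row $i-1$. The rest is the same mechanical argument as in the preceding lemma, so I would not anticipate any further obstruction.
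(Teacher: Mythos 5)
Your proposal is correct and takes essentially the same route as the paper: the paper's proof simply instructs the reader to repeat the proof of Lemma~\ref{lem:doubling-quasi-period}, with the one new ingredient being precisely your four-step closure $f_{j+2}=f_j+f''$, $f_{j+4}=f_j$ for a recursion $f_{j+1}=Nf_j+f'$ with $N^2=I$. Your write-up just makes explicit the bookkeeping the paper leaves implicit, namely that the decimated recursion involves $N^t$, which is again an involution, and that the offset collapses to $n_0$ because the closure holds from $j=1$ with every sampled block already lying beyond the offset of row $i-1$.
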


The exact recursive equation for the next row is slightly more complicated since there are stabilizer generators meeting three rows, $(w-1), w, (w+1)$-th. However, we do not need detailed information how $Q_2^Z$ acts on $(w-1)$- and $w$-th rows to infer the quasi-period of $(w+1)$-th. The constraints on $i$-th row ($i > w$) are as the following.
\begin{center}
\begin{tabular}{c}
$\xymatrix{
\ar[r]^y \ar[d]^x &  & a_{i-1,j+1} \ar@{-}[d] \ar@{-}[r] & a_{i-1,j}\\
                  &  & a_{i,j+1}   & a_{i,j} \ar@{-}[u] \ar@{-}[l] 
}$ \\
$ \xymatrix@!0{
 ZI \ar@{-}[d] \ar@{-}[r]& IZ & & ZZ\ar@{-}[d] \ar@{-}[r] & ZI \\
 IZ & ZZ\ar@{-}[u] \ar@{-}[l]  & & ZI & ZI . \ar@{-}[u] \ar@{-}[l] 
}$
\end{tabular}
\end{center}
The recursive equation is then
\[
 a_{i,j+1} = \begin{pmatrix} 1 & 0 \\ 1 & 1 \end{pmatrix} a_{i,j} 
   + b\begin{pmatrix} a_{i-1,j+1} ,& a_{i-1,j},\\ a_{i-2,j+1} ,& a_{i-2,j} \end{pmatrix},
\]
where $i > w$. We may regard that both $(i-2)$-th and $(i-1)$-th rows have the same quasi-period and the same offset. Hence, we can apply Lemma~\ref{lem:double-doubled-quasi-period} to the rows in $R_z$.

We have reached $e_2$ starting from $e_1$. Since $e_1$ has period 1, and each following row has quasi-period some power of 2, $e_2$ must have quasi-period $2^{w-2} \times 4^{w} = 2^{3w-2}$ with offset $2^{w-2}-1$. The true period 3 of $e_2$ must divide $2^{3w-2}$, which is a contradiction. By Remark~\ref{rem:properties_quasi_period}, we conclude that the length of a nontrivial logical string segment $\zeta^{(2)}_y$ is $ \leq w + (2^{w-2}-1) + 3 \cdot 2^{3w-2}$. Since the length of $\zeta^{(2)}_z$ is $\leq 3w$, the maximum length $\phi_2(w)$ of nontrivial logical string segments of Code 2 satisfies
\[
\phi_2(w) \leq 5w + 2^{3w+1} < \infty,
\]
which completes the proof of Theorem~\ref{thm:Code1234are_free_of_strings} for Code 2.

The proof of Theorem~\ref{thm:Code1234are_free_of_strings} for Code 3 and 4 using similar technique is given in Appendix~\ref{section:pf_thm_code_34}.

\section{Number of encoded qubits and logical operators in finite periodic lattices}
\label{section:k_vs_L}

\subsection{Number of encoded qubits}

The number of encoded qubits of a stabilizer code in any finite periodic lattice will be obtained, once we know all the algebraic relations of stabilizer generators in the infinite lattice. A nontrivial example of this approach is given in \cite{BravyiLeemhuisTerhal2010Topological} for Chamon model. It has a nice property that the product of stabilizer generators becomes the identity only when they form body-diagonal surfaces in the infinite lattice. Since there are 4 body-diagonals, the number of encoded qubits of Chamon model in the periodic lattice $\mathbb{Z}_{2 p_x} \times \mathbb{Z}_{2 p_y} \times \mathbb{Z}_{2 p_z}$ is $k = 4 \gcd(p_x,p_y,p_z)$.

Our cubic codes exhibit even more peculiar dependencies of $k$ on the linear size of the periodic lattice $\mathbb{Z}_L^3$. We found empirical formulae for $k=k(L)$ of Code 0, 1, 2, 3, 4, which are exact if $ 2 \leq L \leq 200 $. For ease of notation, we define the \emph{divisibility function} $q_n$ on positive integers for each positive integer $n$ by
\[
 q_n(L) =  
  \begin{cases}
   1 & \text{if $n$ divides $L$}, \\
   0 & \text{otherwise}.
  \end{cases}
\]
The formulae for $k$ are given in Table~\ref{tb:k_vs_L}.

\begin{table}
\centering
\begin{tabular}{c|c}
\hline
Code & $k(L)$ \\
\hline
0 & $L + 3 \cdot 2^r \left( q_2 + 2 q_7 + 8 q_9 + 48 q_{63} + 64 q_{65} + 18 q_{171} \right)$ \\
1 & $2 \left[ 1 - 2 q_2 + 2^{r+1} \left( q_2 + 12 q_{15} + 60 q_{63} \right) \right]$ \\
2 & $2^{r+1} \left( 1 + 6 q_7 + 6 q_{21} + 30 q_{31} + 60 q_{63} + 126 q_{127} \right)$ \\
3 & $\begin{matrix} 2^{r+1} ( 1 + 8 q_{15} + 6 q_{21} + 40 q_{31} + 42 q_{63} \\ + 16 q_{85} + 112 q_{127} )\end{matrix}$ \\
4 & $2^{r+1} \left( 1 + 2 q_3 + 8 q_{15} + 40 q_{31} + 48 q_{63} + 112 q_{127} \right)$ \\
\hline
\end{tabular}
\caption{Exact empirical formulae for the number $k$ of encoded qubits in periodic finite lattice $\mathbb{Z}_L^3$ as a function of $L$ ($2 \leq L \leq 200$). Here, $q_n=q_n(L)$ is the divisibility function, and $r=r(L)$ is the largest integer such that $2^r$ divides $L$.}
\label{tb:k_vs_L}
\end{table}

\begin{table}
\centering
\begin{tabular}{c|c|c}
\hline
Code & Lower bound & Upper bound \\
\hline
0 & $L + 6 q_2 $ & $12L -12$ ($4L$ if $ 7 \nmid L $) \\
1,2,3,4 & 2 & 4L \\
\hline
\end{tabular}
\caption{Lower and Upper bound on the number of encoded qubits in periodic finite lattice $\mathbb{Z}_L^3$. Here, $q_n=q_n(L)$ is the divisibility function.}
\label{tb:lower_upper_bound_k}
\end{table}

Although we only know empirical formulae of $k$ for small values of $L$, we can prove lower and upper bounds on $k$. Since there are equal number of stabilizer generators and physical qubits, $k$ is equal to the number of independent algebraic relations of generators. It is obvious from the definition of cubic codes that $k \geq 2$ since the product of all generators is the identity in any periodic lattice. In addition, $k$ is always an even number for CSS cubic codes because of the duality between $X$- and $Z$-type generators.

The non-CSS Code 0 is more complicated than the CSS codes. We prove the bounds on $k$ for Code 0 in Appendix~\ref{section:code0}. Consider Code 1 in $\mathbb{Z}_L^3$. We find an independent set of generators of $Z$-type to derive the upper bound. Let $S$ be the set of $Z$-type generators lying outside a straight tunnel $T$ of length $L$ parallel to $z$-axis, whose cross-section is an $1 \times 2$ rectangle enclosed by 6 sites. There are $L^3 - 2L$ generators in $S$. We claim that $S$ is an independent set of generators, and hence $k \leq 4 L$.

Suppose a linear combination $O$ of generators in $S$ is the identity operator. We show $O$ is the zero combination. Choose the origin of the coordinate system such that the sites enclosing the cross-sectional rectangle of $T$ are described by $ x = \pm 1, ~ y=0,1 $. Let $l^{(1)}$ be the straight line given by $ x=0, y=1 $. See Fig.~\ref{fig:tunnel}. Since $O$ is the identity, in particular, $l^{(1)}$ is acted on by the identity. Every unit edge $e_i$ in $l^{(1)}$ connecting $(0,1,i+1)$ and $(0,1,i)$ is one of (read downward)
\begin{align}
 II &- II & &
 ZI - IZ  \notag \\
 IZ &- II & &
 ZZ - IZ \label{eq:action_generator_unit_edge_code_1}
\end{align}
which is canceled by the neighboring unit edges. If $e_1 = II-II$, then $e_2= II-II$. If $e_1 = ZI-IZ$, $e_2 = IZ-II$. If $e_1 = IZ-II$, $e_2=II-II$. If $e_1 = ZZ-IZ$, $e_2 = IZ - II$. We see that $\{ e_i \}_{i \geq 1}$ is eventually $II-II$. Since $l^{(1)}$ is periodic, $II-II = e_{L+1} = e_1$. We conclude that the coefficients of generators in $S$ around $l^{(1)}$ are all zero. Now $O$ is a linear combination of generators lying outside the enlarged tunnel whose cross-sectional rectangle is described by $x = \pm 1, ~ y = 0,2 $.

\begin{figure}
\centering
\includegraphics[width=0.4\textwidth]{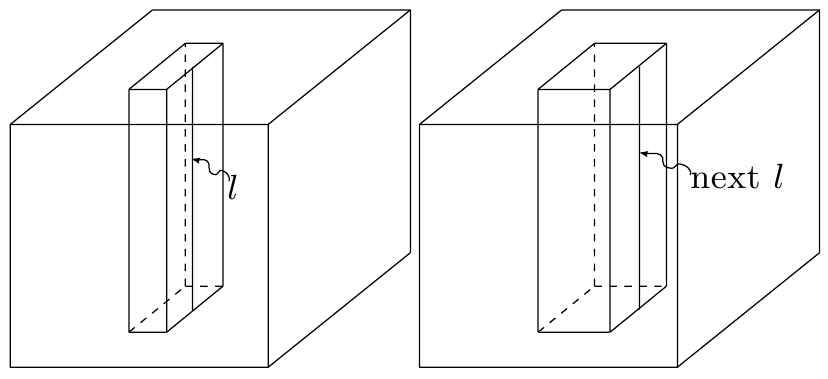}
\caption{Tunnel used to derive the upper bound on $k$, the number of encoded qubits. The stabilizer generators outside the tunnel are independent.}
\label{fig:tunnel}
\end{figure}

We can repeat the argument inductively on the lines parallel to $z$-axis, each of which is given by
\[
\{ (0,y',z) \in \Lambda ~|~ 0 \le z < L \} ,
\]
where $y'=2,\ldots,L-1$, until the tunnel becomes a slab of width 2. (The slab is in between two planes $ x = \pm 1 $.) Consider a straight line $l'^{(1)}$ parallel to $z$-axis given by $ y=0, x=1$. Since possible operators on each unit edge of $l'^{(1)}$ are again given by eq.\eqref{eq:action_generator_unit_edge_code_1}, we argue similarly to conclude all coefficients in $O$ are zero. This completes the proof of the upper bound on $k$ of Code 1.

For Code 2, we use the same initial tunnel $T$. The initial $l^{(2)}$ is chosen to be the line parallel to $z$-axis given by $x=0, ~ y=0$. For Code 3, we choose $l^{(3)} = l^{(1)}$.
Every unit edge of $l^{(3)}$ is one of
\begin{align*}
 II &- II & &
 IZ - II  \\
 IZ &- ZZ & &
 II - ZZ
\end{align*}
which must be canceled by neighboring unit edges. The only choice is $II-II$. Due to two-fold symmetry of Code 2 and 3, the rest of calculation is easy and proves the upper bound. For Code 4, we choose the same initial tunnel $T$ and the initial line $l^{(4)} = l^{(1)}$. Arguing as above, one enlarges the tunnel until it becomes a slab of width 2. Now consider any horizontal line $l'$ ($z=z'$) in $x=1$ plane. We see that every edge is one of
\begin{align*}
 II &- II & &
 ZI - IZ  \notag \\
 II &- IZ & &
 ZI - II
\end{align*}
which must be canceled by neighboring edges. The only choice is $II-II$. This completes the proof for the upper bound on $k$ of Code 4.

\subsection{Plane logical operators}

Theorem~\ref{thm:Code1234are_free_of_strings} says, in particular, that Code 1, 2, 3, 4 do not have any nontrivial logical operators that is supported on a thin strand $\{1,\ldots,w\} \times \{1,\ldots,w\} \times \mathbb{Z}_L$ where $ w = O(1) $. Bravyi and Terhal~\cite{BravyiTerhal2008no-go} showed that any local stabilizer code must have nontrivial logical operator supported on a thin \emph{slab} $\{1,\ldots,w\} \times \mathbb{Z}_L^2$, where $w$ is the interaction range. Indeed, cubic codes have logical operators on $(w=1)$-slab, \ie, \emph{plane logical operators}. We consider the simplest plane logical operators that are repetition of a single site operator (two qubit operator).
For a single site (two qubit) operator $E$, we define
\[
 \sigma^{[a,b,c]}_{E} = \bigotimes_{(a,b,c)\text{-plane}} E 
\]
to be the tensor product of $E$ over the plane orthogonal to $(a,b,c)$.
$\sigma_E^{\hat{x}}$ is logical if and only if
\begin{align*}
 \lambda(E,A_Z) + \lambda(E,B_Z) + \lambda(E,C_Z) + \lambda(E,D_Z) &= 0 ,\\
 \lambda(E,A'_Z) + \lambda(E,B'_Z) + \lambda(E,C'_Z) + \lambda(E,D'_Z) &= 0 .
\end{align*}
which is equivalent to $ \lambda( E, A_Z B_Z C_Z D_Z ) = 0 $ and $ \lambda( E, A'_Z B'_Z C'_Z D'_Z ) = 0 $.
(The two are in fact equivalent because one of our conditions defining cubic codes requires that the product of all eight corner operators be $II$.)
We see that $\sigma^{[100]}_{IX}$ is a logical operator of Code 1, and so is $\sigma^{[1,-1,0]}_{ZZ}$. Moreover, if the linear lattice size $L$ is odd, they anti-commute, and hence are both nontrivial. In this way, one can easily find logical operators of form $\sigma^{[abc]}_{E}$. We make it clear that the plain logical operators found in this way do not generate all the logical operators. Empirically, for some special lattice size, \emph{e.g.} $L=8,15,63$, there are many more logical operators that cannot be described by the plane logical operators.

\section{Thermal stability}
\label{section:thermal_stability}

In this section, we discuss the energetics of implementing nontrivial logical operators of Code 1. A natural choice of Hamiltonian for a local stabilizer code is the sum of local generators:
\begin{equation}
 H = - \frac{1}{2} \sum_{p \in \Lambda} (Q_1^Z)_p +  (Q_1^X)_p,
\label{eq:memory_hamiltonian}
\end{equation}
whose ground space is the code space. We consider adverse logical operations on the code space by thermal environment. Quantum tunneling between two different ground states is exponentially suppressed in the system size since eq.\eqref{eq:memory_hamiltonian} is gapped and the code distance is at least $L$. We may model the thermal noise as a Markovian chain of actions to the system by Pauli operators of weight one. At each step, a Pauli operator by the environment will excite some terms in the Hamiltonian. Such excitation is completely determined by the accumulated \emph{syndrome}, the data that describes which stabilizers are unsatisfied. Formally, a syndrome can be viewed as a $\mathbb{Z}_2$-valued function on the stabilizer group.

We claim that the energy of any partial implementation of $\sigma_{IX}^{[100]}$ is proportional to its boundary length. A partial implementation of a logical operator is a Pauli operator that is a restriction of the logical operator on a subset of sites. It has excitations, if any, only along the boundary. Note that a syndrome caused by a Pauli operator is the sum of syndromes caused by Pauli operators of weight one, each of which is expressed as a cube in the dual lattice as shown in Fig.~\ref{fig:syndrome_cube}. The syndrome corresponding to the partial implementation of $\sigma_{IX}^{[100]}$ is expressed as a stack of $IX$-cubes. An outer boundary point of the stack is either a vertex of one syndrome cube, a point where two cubes meet, or a point where three cubes meet. In any case, one can easily verify that there exists a filled dot \ie, an excitation within distance one from the boundary point. Therefore, the number of excitations (the energy) of any partial implementation of $\sigma_{IX}^{[100]}$ is at least the length of the outer boundary. One checks that the same holds for plane logical operators $\sigma_{XI}^{[110]}$ and $\sigma_{XX}^{[1,-1,0]}$.

\begin{figure}
\centering
\includegraphics[width=0.3\textwidth]{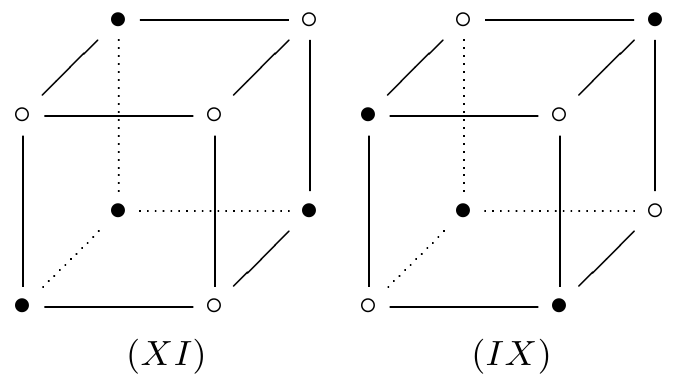}
\caption{Syndrome cubes of Code 1 caused by a Pauli operator of weight one. Filled dots indicate excited $Q_1^Z$. See also Fig.~\ref{fig:generator_Code_01234}}
\label{fig:syndrome_cube}
\end{figure}

This property that energy is proportional to the boundary length of partial logical operator mimics that of 4D toric code. Still, it is needed to show that any partial implementation of \emph{any} nontrivial logical operator confronts high energy barrier. But we do not have a complete description of all equivalent variants of, say, $\sigma_{IX}^{[100]}$ of Code 1. Moreover, we do not have a list of logical operators for arbitrary system size.

We finally note that the system governed by eq.\eqref{eq:memory_hamiltonian} is a quantum glass~\cite{Chamon2005Quantum}; any isolated excitation (defect) cannot propagate easily to meet its partner to annihilate if the temperature is low. The situation is more stringent than Chamon model where one can move even number of excitations by string operators. In Code 1, there is no local Pauli operator that can move a localized, but not locally created, set of excitations to nearby location congruently.

\section{Discussion}
\label{section:discussion}

We have defined logical string segments for local stabilizer codes, whose length determines the existence of nontrivial string logical operator. We classified translation-invariant CSS stabilizer codes in three dimensions with two stabilizer generators, and also found the unique non-CSS stabilizer code with the special symmetry. We showed that some of the CSS cubic codes do not have string logical operators. The codes without string logical operator exhibit peculiar dependence of the number of logical qubits on the system size. This is because there are complicated algebraic relations of stabilizer generators, for which we do not have full description.

The thermodynamic stability of encoded information in the code is still an open problem. It has been proved that in order for a quantum memory based on stabilizer codes to have a thermal stability, it suffices to have a good error correcting procedure by which every low energy syndromes are \emph{good}~\cite{ChesiLossBravyiEtAl2009Thermodynamic}. Returning to Code 1 with the system size such that $k=2$, there is only one algebraic relation of stabilizer generators: The product of all generators is equal to the identity. Therefore, there is the unique constraint (conserved charge) on the space of syndromes: The number of excitations must always be even. In other words, any state with two excitations arbitrarily far apart is allowed. Since there is no string operator, a Pauli operator that causes such excitations must be of complex shape. A good error correcting procedure should answer how such a syndrome is formed. On the other hand, if every three dimensional local stabilizer code is not stable, our codes illustrate that one should not attempt to prove it by showing the existence of string logical operator.

It has been recently proved that any Pauli walk that results in a nontrivial logical operator of Code 1 must experience energy barrier of height $\Omega( \log L )$ \cite{BravyiHaah2011}.

\begin{acknowledgments}
This research was supported in part
by NSF under Grant No. PHY-0803371, 
by ARO Grant No. W911NF-09-1-0442,
by DOE Grant No. DE-FG03-92-ER40701,
and by Korea Foundation for Advanced Studies.
I thank Salman Beigi, Liang Jiang, and especially John Preskill for useful discussions.
\end{acknowledgments}

\appendix

\section{Proof of Theorem~\ref{thm:Code1234are_free_of_strings} for Code 3, 4}
\label{section:pf_thm_code_34}

For any logical string segment $\zeta$, we first deform it such that it is supported on the smallest box containing two anchors. This is easily done if there are two orthogonal edges that are good for erasing on each face of $Q^Z_{3,4}$. However, Code 3 and 4 have faces that do not have such two edges. Let $II$ of $Q_{3}^Z$ be at $(0,0,0)$. The $(x=0)$-face of $Q_3^Z$ has only one edge $IZ-ZZ$ that is good for erasing. Using this, we can shrink the $x$-size of the support of $\zeta$ from positive $x$-axis, except a $y$-plane $R_y$ of thickness one. The constraints on the outer edge $e$ of $R_y$ are $ IZ - II, IZ-ZZ$. Every unit edge in $e$ is thus either $II-II, XI-II, II-XX, XI-XX$. Since $e$ is finite, every site of $e$ is $II$. The $(y=0)$-face of $Q_3^Z$ is symmetric to $(x=0)$-face. Code 4 has two faces that do not have two orthogonal good edges for erasing: $(x=0)$- and $(z=0)$-face of $Q_4^Z$. One can repeat the same argument.

\begin{figure}
\centering
\includegraphics[width=0.45\textwidth]{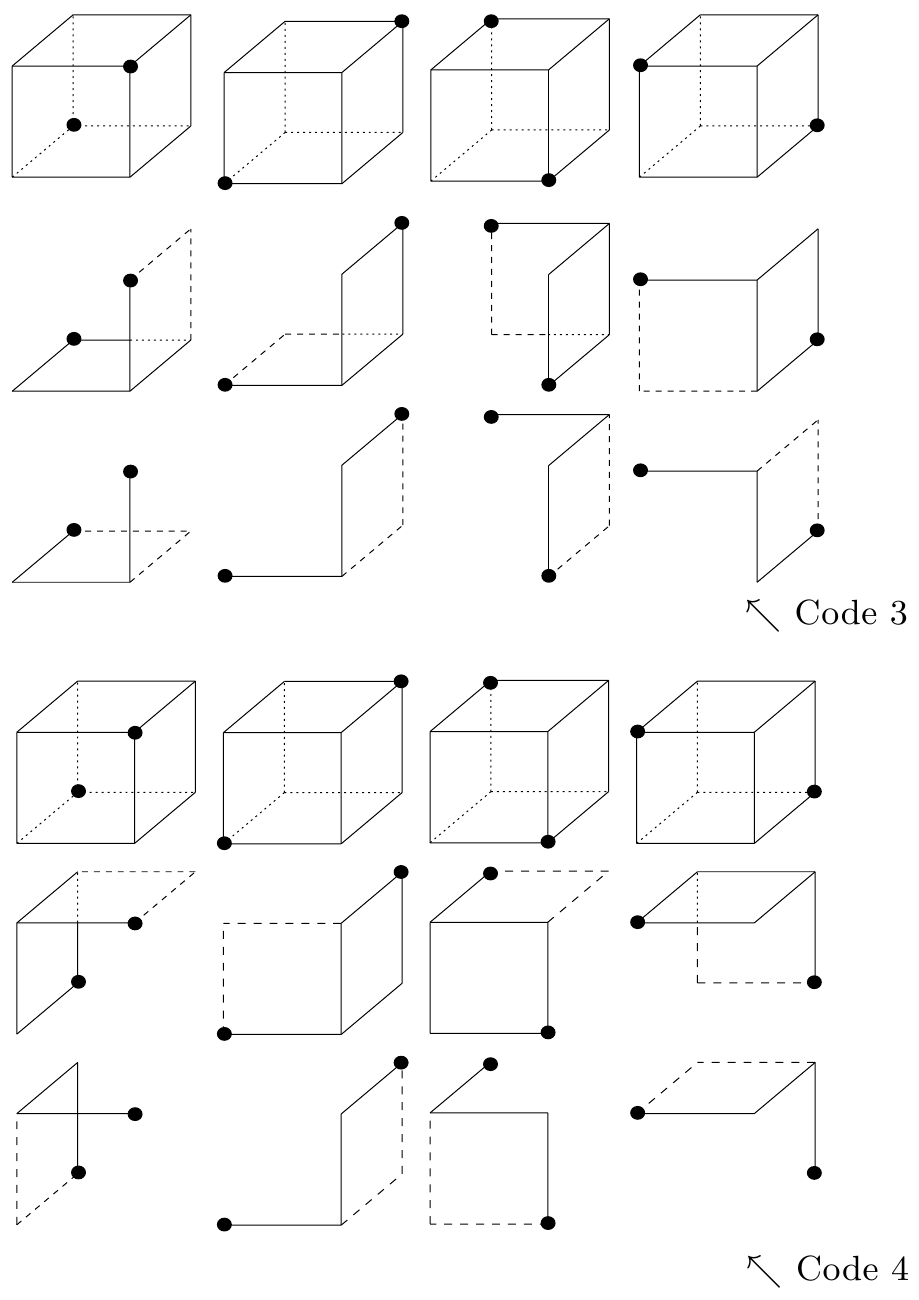}
\caption{Deformations of logical string segments of Code 3, 4. Anchors are marked by the filled dots.}
\label{fig:reduction_flat_ss_code_34}
\end{figure}

Fig.~\ref{fig:reduction_flat_ss_code_34} shows that every logical string segment of Code 3 and 4 is equivalent to a union of logical string segments whose directional vectors are parallel to the coordinate axes. Code 3 has two-fold symmetry $\hat{x} \leftrightarrow \hat{y}$, and therefore it is enough to consider $\zeta^{(3)}_z$ along $z$-axis and $\zeta^{(3)}_y$ along $y$-axis. $\zeta^{(3)}_z$ can be further deformed by $IZ-ZZ$(read downward) of $Q_3^Z$. See Fig.~\ref{fig:deformation_flat_ss_zy_code_3}. We consider the constraints $IZ-II, IZ-ZZ$ given by $Q_3^Z$ on the external edge. They are confusing because we have to obtain a consistent solution on the edge using
\begin{align*}
 II &- II &  XI &- II \\
 II &- XX &  XI &- XX .
\end{align*}
Therefore, a nontrivial $\zeta^{(3)}_z$ has length $\leq w+(2w-1)$. $\zeta^{(3)}_y$ is deformed using $ZZ-ZI$(read rightward) of $Q_3^Z$ as depicted in Fig.~\ref{fig:deformation_flat_ss_zy_code_3}. One edge is confused, and we are left with a rectangle normal to $z$-axis. $e_1$ is constrained by $IZ-IZ$ and $II-ZZ$. A possible neighboring pair is one of $II-II$, $XI-II$, $IX-XX$, and $XX-XX$. Thus, any nontrivial $e_1$ is eventually $\cdots - XX - \cdots$ of period 1. $e_2$ is constrained by $IZ-ZZ$ and $ZZ-ZI$. A possible neighboring pari is one of $II-II$, $XI-XX$, $IX-XI$, and $XX-IX$. A nontrivial $e_2$ is $\cdots-XI-XX-IX-XI-\cdots$ of period 3. In between $e_1$ and $e_2$, we have
\begin{center}
\begin{tabular}{c}
$\xymatrix{
\ar[r]^y \ar[d]^x &  & a_{i,j} \ar@{-}[d] \ar@{-}[r] & a_{i,j+1}\\
                  &  & a_{i-1,j}   & a_{i-1,j+1} \ar@{-}[u] \ar@{-}[l] 
}$ \\
$ \xymatrix@!0{
 II \ar@{-}[d] \ar@{-}[r]& ZZ & & IZ\ar@{-}[d] \ar@{-}[r] & IZ \\
 ZZ & ZI\ar@{-}[u] \ar@{-}[l]  & & IZ & ZZ .\ar@{-}[u] \ar@{-}[l] 
}$
\end{tabular}
\end{center}
The recursive equation is then
\[
 a_{i,j+1} = \begin{pmatrix} 0 & 1 \\ 0 & 1 \end{pmatrix} a_{i,j} + b( a_{i-1,j+1}, a_{i-1,j} ) .
\]
Since $M = \begin{pmatrix} 0 & 1 \\ 0 & 1 \end{pmatrix} = M^2$, the quasi-period of $i$-th row is twice as that of $(i-1)$-th by Lemma~\ref{lem:doubling-quasi-period}. Therefore, a nontrivial $\zeta^{(3)}_y$ has finite length. See Remark~\ref{rem:properties_quasi_period}. This completes the proof of Theorem~\ref{thm:Code1234are_free_of_strings} for Code 3.

\begin{figure}
\centering
\includegraphics[width=0.4\textwidth]{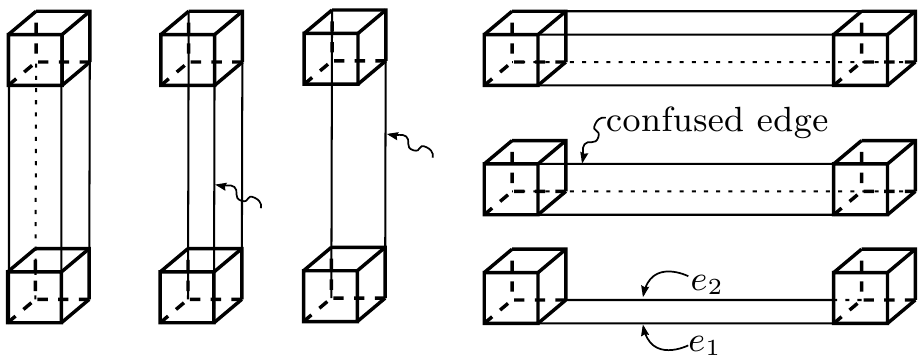}
\caption{Deformation of flat logical string segments $\zeta^{(3)}_z$ and $\zeta^{(3)}_y$.}
\label{fig:deformation_flat_ss_zy_code_3}
\end{figure}

Code 4 has no symmetry, and we need to check $\zeta^{(4)}_x, \zeta^{(4)}_y, \zeta^{(4)}_z$.
The calculation is straightforward following Fig.~\ref{fig:deformation_flat_ss_code_4}. $\zeta^{(4)}_z$ must be short if nontrivial, because
\begin{align*}
 \{ ZI-II, IZ-ZI \} \\
 \{ IZ-ZI, ZZ-ZI \}
\end{align*}
(read downward) are confusing constraints. $\zeta^{(4)}_x$ must be short because
\begin{align*}
 \{ ZI - II, ZI-IZ \} \\
 \{ ZI - II, IZ-ZI \}
\end{align*}
(read as decreasing $x$-coordinate) are confusing constraints. $\zeta^{(4)}_y$ is deformed to have eventually periodic two edges $e_1,e_2$. On $e_1$, the constraint is $\{ II-IZ, ZI-ZI \}$ whose solution is a linear combination of $IX-II$ and $ XI-XI$. Hence, $e_1$ is of period 1. On $e_2$, the constraint is $\{ IZ-ZZ, ZI-IZ \}$ whose solution is a linear combination of $IX-XI$ and $XI-XX$. Hence, $e_2$ is of period 3. In between $e_1$ and $e_2$, we have:
\begin{center}
\begin{tabular}{c}
$\xymatrix{
                  &  & a_{i-1,j} \ar@{-}[d] \ar@{-}[r] & a_{i-1,j+1}\\
\ar[u]^z \ar[r]^y &  & a_{i,j}   & a_{i,j+1} \ar@{-}[u] \ar@{-}[l] 
}$ \\
$ \xymatrix@!0{
 ZI \ar@{-}[d] \ar@{-}[r]& IZ & & IZ\ar@{-}[d] \ar@{-}[r] & ZZ \\
 II & IZ\ar@{-}[u] \ar@{-}[l]  & & ZI & ZI\ar@{-}[u] \ar@{-}[l]
}$
\end{tabular}
\end{center}
The recursive equation is
\[
 a_{i,j+1} = \begin{pmatrix} 1 & 0 \\ 0 & 0 \end{pmatrix} a_{i,j} + b( a_{i-1,j} , a_{i-1,j+1} ),
\]
to which we apply Lemma~\ref{lem:doubling-quasi-period}. We completed the proof of Theorem~\ref{thm:Code1234are_free_of_strings} for Code 4.

\begin{figure}
\centering
\includegraphics[width=0.4\textwidth]{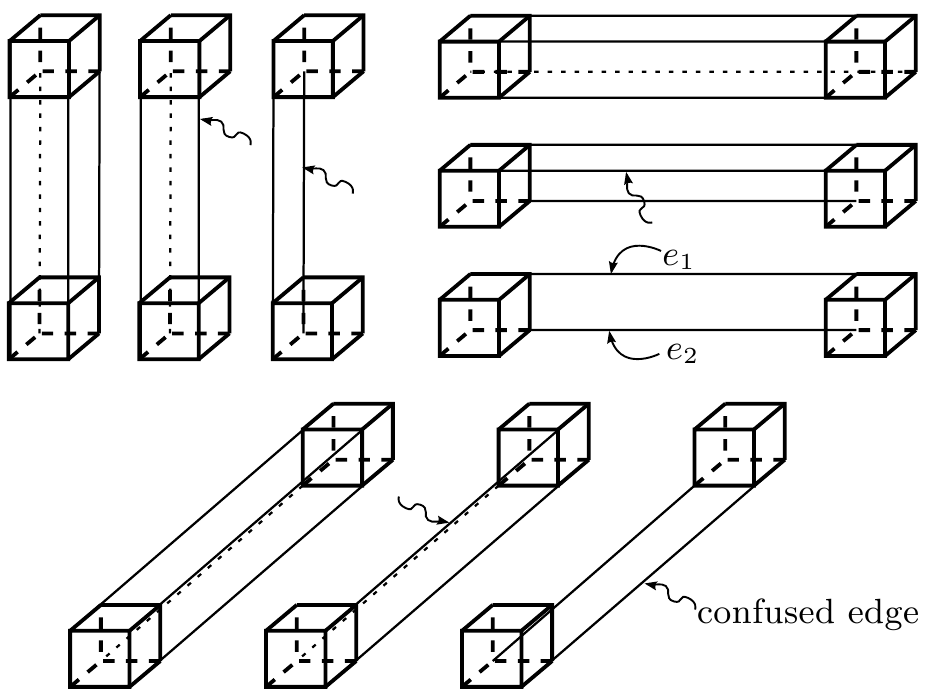}
\caption{Deformation of flat logical string segments $\zeta^{(4)}_z$, $\zeta^{(4)}_y$ and $\zeta^{(4)}_x$.}
\label{fig:deformation_flat_ss_code_4}
\end{figure}

\section{Cubic codes with string logical operators}
\label{section:codes_w_string_logical_operators}

For a local stabilizer code defined on a periodic finite lattice $\mathbb{Z}_L^3$, suppose there exists a nontrivial logical operator $O$ supported on a strand $\{1,\ldots,w\} \times \{1,\ldots,w\} \times \mathbb{Z}_L$. If $w+2(r-1) < L$ where $r$ is the interaction range of the stabilizer generators, then we may lift $O$ to a periodic logical operator $O'$ of the code on the infinite lattice supported on $\{1,\ldots,w\} \times \{1,\ldots,w\} \times \mathbb{Z}$. Suppose further that any finite logical operator is trivial in the infinite lattice.
If a contiguous part $\zeta$ of $O'$ is trivial as a logical string segment, then every congruent part of $\zeta$ is also trivial; $O'$ is a disconnected union of finite logical operators. Hence $O$ is trivial, a contradiction. Therefore, any contiguous part of $O'$ is a nontrivial logical string segment. The maximum length of the nontrivial logical string segment is infinite.

In Table~\ref{tb:string_logical_operators} we list nontrivial logical operators supported on strands for the cubic codes marked with $\dagger$ in Table~\ref{tb:list_codes}. Recall that $E[v]_p$ denotes the Pauli operator $\cdots \otimes E \otimes E \otimes \cdots$ on the line along the vector $v$ passing through $p$. For example, $ZI[0,0,1]_{(3,0,0)} ZZ[0,0,1]_{(2,0,0)}$ represents (when $L=5$)
\[
 \begin{matrix}
II & II & ZI & ZZ & II \\
II & II & ZI & ZZ & II \\
II & II & ZI & ZZ & II \\
II & II & ZI & ZZ & II \\
II & II & ZI & ZZ & II .
\end{matrix}
\xymatrix@!0{
&  & \\
&  & \ar[l]^x \ar[u]^z
}
\]

\begin{table}
\begin{tabular}{c|c|c}
\hline
Code & string logical operator & complement ($L=5$) \\
\hline
11 & $ZZ[\hat{z}]_{(000)} ZI[\hat{z}]_{(100)}$ & $XI[\hat{y}]_{(000)} IX[\hat{y}]_{(100)}$ \\
12 & $IZ[\hat{z}]_{(000)} ZI[\hat{z}]_{(010)}$ & $XI[\hat{x}]_{(000)} XX[\hat{x}]_{(010)}$ \\
13 & $ZZ[\hat{z}]_{(000)} IZ[\hat{z}]_{(010)}$ & $\sigma^{[010]}_{IX}(000)$ \\
14 & $IX[\hat{z}]_{(000)} XI[\hat{z}]_{(010)}$ & $\sigma^{[001]}_{IZ}(000)$ \\
15 & $ZI[\hat{y}]_{(000)} ZZ[\hat{y}]_{(100)}$ & $IX[\hat{z}]_{(000)} XI[\hat{z}]_{(100)}$ \\
16 & $ZZ[101]_{(000)} IZ[101]_{(100)}$         & $IX[110]_{(000)} XI[110]_{(100)}$ \\
17 & $ZZ[\hat{x}]_{(000)} IZ[\hat{x}]_{(001)}$ & $ \sigma^{[001]}_{IX}(000)$ \\
\hline
\end{tabular}
\caption{Nontrivial string logical operators. $E[v]_p$ denotes the Pauli operator such that $E$ is repeated along $v$ passing through $p$, and $\sigma^v_E(p)$ denotes the Pauli operator such that $E$ is repeated on the plane containing $p$ perpendicular to $v$.}
\label{tb:string_logical_operators}
\end{table}

\section{Code 0}
\label{section:code0}
\subsection{Three-fold symmetry}
We remark that there is a three-fold symmetry for the generators of Code 0. (Fig.~\ref{fig:generator_Code_01234}) If we rotate $Q_0$ by $120^\circ$ about $(1,1,1)$-axis, and then apply the transformation $?X \to ?Y \to ?Z \to ?X$, we see that $Q_0$ is invariant. This is in fact expected from the commutation relation $\omega$ of corner operators of $Q_0$; we calculated $Q_0$ from $\omega$. The trasformation is symplectic. If we order the basis for $\mathcal{P}_2$ as $\{ XX,XZ,ZX,ZZ \}$, the transformation is $S R_{120^\circ} $, where $R_{120^\circ}$ is the rotation about $(1,1,1)$-axis and
\[
S=
 \begin{pmatrix}
  1 & 0 & 0 & 0 \\ 
  0 & 1 & 0 & 0 \\
  0 & 0 & 1 & 1 \\
  0 & 0 & 1 & 0
 \end{pmatrix}.
\]
One directly checks that $S$ preserves the symplectic form of the abelianized $\mathcal{P}_2$
\[
\lambda=
 \begin{pmatrix}
  0 & 1 & 0 & 0 \\ 
  1 & 0 & 0 & 0 \\
  0 & 0 & 0 & 1 \\
  0 & 0 & 1 & 0
 \end{pmatrix},
\]
\ie, $S^T \lambda S = \lambda$. This is an example of Lemma~\ref{lem:unique_realization_of_commutation_relation}. A direct consequence of the three-fold symmetry is that logical operators always appear as a triple.

\subsection{Bounds on the number of encoded qubits}

We show $k$, the number of encoded qubits of Code 0 defined on the periodic finite lattice $\mathbb{Z}_L^3$, satisfies $k \geq L$. To this end, we present an algebraic relation of stabilizer generators in the infinite lattice that can be embedded into a periodic lattice of arbitrary linear size. An algebraic relation of stabilizer generators is a distribution of generators whose product is the identity.
(Formally, the space of algebraic relations is the kernel of the linear map $\{ f:\Lambda^* \to G \} \to \{ \sigma : \Lambda \to \mathcal{P}_2 \}$ where $\Lambda^*$ is the dual lattice to $\Lambda = \mathbb{Z}^3$, $G$ is the abelian group of \emph{labels} of stabilizer generators, and $\mathcal{P}_2$ is the abelianized Pauli group on a single site. $G$ is isomorphic to $\mathbb{Z}_2 \times \mathbb{Z}_2$ for our cubic codes because there are two types of generators. This map is meaningful for translation-invariant stabilizer codes.)
Note that the set of the locations of generators in an algebraic relation need not be finite; the product is well-defined if the number of generators acting on each site is finite, \ie, locally finite. For some relations the set of the locations of generators form a sublattice, which we call \emph{relation lattice}, and can be described by a unit cell and basis vectors. Indeed, the relations that gives the lower bound $k \geq L$ forms sublattices.

The relation we consider first has its own lattice structure $R_1$ with a basis $ \{ (1,0,-1),(0,1,-1) \}$, the unit cell of which consists of $Q_0$ and $Q_0^P$ such that $Q_0^P$ is at $(0,0,1)$ relative to $Q_0$. In other words, Fig.~\ref{fig:relation_lattice_code_0} is repeated according to $(1,0,-1),(0,1,-1)$. Since the two basis vectors have period $L$ in the finite lattice of linear size $L$, one can embed this relation into any finite periodic lattice, and there are $L$ linearly independent such embeddings via translations along $(1,0,0)$. Therefore, we have at least $L$ independent algebraic relations of stabilizer generators for a $L \times L \times L$ lattice. This proves that $ k \geq L $. Note that this relation lattice is invariant under the three-fold symmetry of Code 0. If we consider the embeddings into finite periodic lattice with arbitrary three linear sizes $\mathbb{Z}_{L_x} \times \mathbb{Z}_{L_y} \times \mathbb{Z}_{L_z}$, we see that $ k \geq \gcd( L_x, L_y, L_z ) $.

\begin{figure}
\centering
\includegraphics[width=0.4\textwidth]{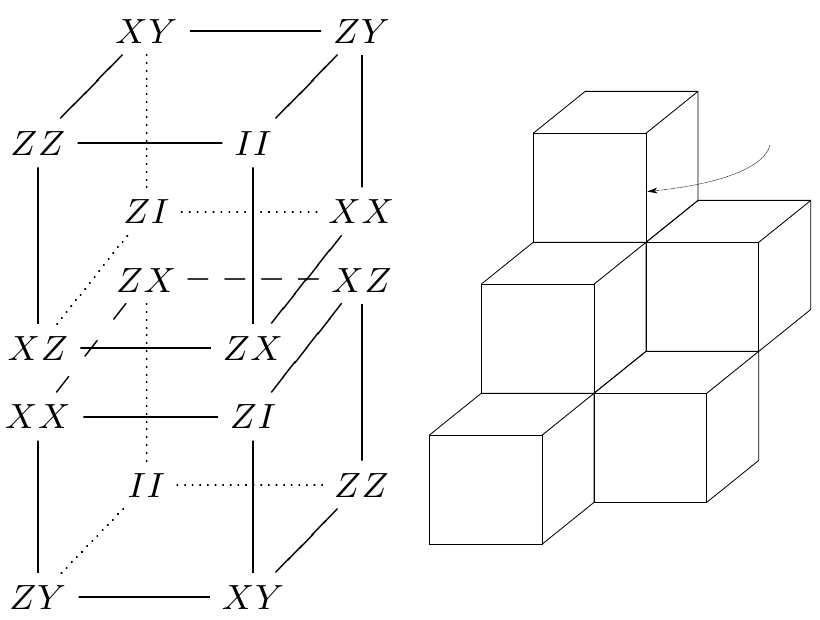}
\caption{Unit cell of a relation lattice of Code 0. The relation lattice has basis $\{ (-1,0,1),(0,-1,1) \}$. The boxes displays the configuration of $Q_0$'s in the relation. $Q_0^P$ lies on the top of $Q_0$. The line designated by the arrow is acted on trivially by four unit cells around it, as one can directly check using the Pauli operator diagrams.}
\label{fig:relation_lattice_code_0}
\end{figure}

Another relation lattice $R_2$ has basis $\{ (2,0,0),(0,2,0),(0,0,2) \}$ with unit cell such that $Q_0^P$ is at $(1,1,1)$ relative to $Q_0$. $R_2$ is embedded into any finite lattice of even linear size. There are 8 independent relations whose underlying lattice structure is $R_2$. However, $R_1$ and $R_2$ are not all independent. It is not hard to see $R_2$ gives 6 more algebraic relations. This proves the lower bound $k \geq L + 6 q_2$.

It should be pointed out that $-I \notin \mathcal{S}$ is not automatically guaranteed since Code 0 is non-CSS. However, the problem is resolved if one choose appropriate sign for each dependent generator. One can check directly that relations $R_1$ and $R_2$ does not generate $-I$.

In order to prove an upper bound on $k$, suppose $7$ does not divide $L$. Then we can show that $ k \leq 4 L $. The proof is very similar to that of Code 1. Let $S$ be the set of generators lying outside a straight tunnel $T$ of length $L$ parallel to $y$-axis, whose cross-section is $1 \times 2$ rectangle enclosed by 6 sites, \ie, the tunnel is $1 \times L \times 2 ~ (\Delta x \times \Delta y \times \Delta z)$. There are $2 L^3 - 4 L$ generators in $S$. We show $S$ is an independent set of generators.

Suppose a linear combination $O$ of generators in $S$ is the identity operator. We show $O$ is the zero combination. Choose the origin of the coordinate system such that the sites enclosing the cross-sectional rectangle are described by $ x = 0,1,\ z=-1,0,1 $. Let $l$ be the set of sites on the straight line given by $ x=z=0$. Since $O$ is the identity, $l$ is acted on by the identity. Every unit edge $e_i$ in $l$ connecting $(0,i,0)$ and $(0,i+1,0)$ is a linear combination of
\begin{align}
 XX &- ZI & &
 ZZ - II  \notag \\
 ZY &- XY & &
 XZ - ZX \label{eq:action_generator_unit_edge_code_0}
\end{align}
which is canceled by the neighboring edges. Since the left-hand side operators of eq.\eqref{eq:action_generator_unit_edge_code_0} are independent, we can unambiguously determine $e_{i+1}$ given $e_i$, \emph{e.g.}, if $e_1 = XX - ZI$, then $e_2 = ZI - XZ$. The right-hand side operators of $e_i$'s form an inference chain $ZI - XZ - ZX - IX - XY - YZ - YY - ZI$. Note that this inference chain exhausts all the combination of operators on the right-hand side of eq.\eqref{eq:action_generator_unit_edge_code_0}. Therefore, $\{ e_i \}_i$ is eventually periodic with period 7, or is eventually $II$. If $7 \nmid L$, we must have $e_i = II$ for all $i$. We showed that the coefficients of the operators touching $l$ in $O$ are all zero. We can repeat the argument to infer that $O$ does not involve stabilizer generators lying in between $z=\pm 1$.

Consider the set of sites on the line $l'$ given by $x=0,z=-1$. A unit edge $e'_i$ in $l'$ connecting $(0,i,-1)$ and $(0,i+1,-1)$ is a linear combination of
\begin{align*}
 ZX &- XZ & &
 XY - ZY  \notag \\
 XX &- ZI & &
 ZZ - II .
\end{align*}
The inference chain is $ZI - IY - XZ - YZ - YX - ZY - XX - ZI$, which is again of eventual period 7. Therefore, $O$ is a zero combination. The direction of the tunnel $T$ can be along any coordinate axis, as is implied by three-fold symmetry.

In a general case where $7$ may divide $L$, we consider three tunnels $T_x, T_y, T_z$ intersecting at one $1 \times 2 \times 2$ box $B$. It is easy to see that stabilizer generators ($S$) lying outside $T_x \cup T_y \cup T_z$ are independent; there are $2 L_x L_y L_z - 8 L_x - 4 L_y - 4 L_z + 16$ independent generators. We can add to $S$ more independent generators. Let the box $B$ be given by $ 0 \leq x \leq 1, ~ -1 \leq y,z \leq 1$. Consider a long contractible tube $T'$ given by $ 1 \leq x \leq L_x, ~ 0 \leq y \leq 1, ~ -1 \leq z \leq 1$. Then $S'$, the union of $S$ and the set of stabilizer generators lying in the tube $T'$, is a set of independent generators. Since $T'$ contains $4(L_x-1)$ generators, $|S'| = 2 L_x L_y L_z - 4 L_x - 4 L_y - 4 L_z + 12$, which proves the upper bound $ k \leq 12L -12$.

\subsection{String logical operators}

There exist string logical operators for Code 0. By three-fold symmetry, these string logical operators form a triple. Using the notation in Table~\ref{tb:string_logical_operators}, they are
\begin{eqnarray*}
\theta^Z_{p} =
 & ZZ[1,0,-1]_{p+\hat{z}} XI[1,0,-1]_{p} ZZ[1,0,-1]_{p-\hat{z}} , \\
\theta^X_{p} = 
 & ZX[-1,1,0]_{p+\hat{x}} XI[-1,1,0]_{p} ZX[-1,1,0]_{p-\hat{x}} , \\
\theta^Y_{p} = 
 & ZY[0,-1,1]_{p+\hat{y}} XI[0,-1,1]_{p} ZY[0,-1,1]_{p-\hat{y}} ,
\end{eqnarray*}
where $\hat{z},\hat{x},\hat{y}$ are directional unit vectors. We call them \emph{basic string logical operators} for Code 0, or basic strings for short.
Applying the techniques of deforming logical string segments, one can show that any logical string segment is a union of `flat' logical string segment and some contiguous part of basic strings. Numerical result suggests that any long flat logical string segment $\zeta^{(0)}_x$ is trivial; it is trivial if the length of $\zeta^{(0)}_x$ is greater than $3w$, where $w$ is the width of $\zeta^{(0)}_x$, for $ w = 2, \ldots, 600 $. Therefore, it is legitimate to conjecture that any long logical string segment is some product of basic strings. If this is true, we can consider a subsystem code by \emph{gauging out} the logical qubits that are affected by basic strings.

Interestingly, we can show that there are at least one logical qubit left in the resulting subsystem code in any finite periodic lattice $\mathbb{Z}_L^3$. To see this, it is enough to calculate the commutation relation of independent basic strings (up to the stabilizer group). After Gram-Schmidt procedure with respect to commutation symplectic form applied to independent set of basic strings, suppose we get $2 k_h$-dimensional hyperbolic space, and $k_i$-dimensional isotropic space. Then the number of gauge qubits is $k_h + k_i$.

It is not hard to see that $\theta^Z_{(0,0,0)} \cdot \theta^X_{(0,0,0)} \cdot \theta^Y_{(0,0,0)}$ is in the stabilizer group $\mathcal{S}_0$ of Code 0 in $\mathbb{Z}_L^3$; it is equal to the product of $Q_0, Q_0^P$ inside the triangle formed by three basic strings. Also, 
\begin{align*}
\theta^Z_{(0,0,0)} \cdot \theta^Z_{(a,b,c)} \in \mathcal{S}_0 , \\
\theta^X_{(0,0,0)} \cdot \theta^X_{(a,b,c)} \in \mathcal{S}_0 , \\
\theta^Y_{(0,0,0)} \cdot \theta^Y_{(a,b,c)} \in \mathcal{S}_0 
\end{align*}
for any $a,b,c \in \mathbb{Z}_L$ such that $ a+b+c = 0$. Therefore the maximally independent set of basic strings up to $\mathcal{S}_0$ is contained in
\[
 \Gamma = \{ \theta^Z_{(0,i,0)}, \theta^X_{(0,0,i)} : i \in \mathbb{Z}_L \},
\]
\ie, $2 k_h + k_i \leq |\Gamma| = 2 L$. If $L$ is odd,
\begin{align*}
 \prod_i \theta^Z_{(0,i,0)} \equiv \sigma_{XI}^{[010]} \in \mathcal{S}_0 ,\\
 \prod_i \theta^X_{(0,0,i)} \equiv \sigma_{XI}^{[001]} \in \mathcal{S}_0 ,
\end{align*}
where $\equiv$ means equality up to $\mathcal{S}_0$. Hence $2 k_h + k_i \leq 2L -2$ for odd $L$.
Note that $\theta^Z_{(0,i,0)}$ and $\theta^Z_{(0,i',0)}$ always commute, and so do $\theta^X_{(0,i,0)}$ and $\theta^X_{(0,i',0)}$. Two basic strings $\theta^X_{(0,i,0)}$ and $\theta^Z_{(0,j,0)}$ anti-commute if and only if they meet at one site. We can write the commutation relation matrix of $\Gamma$ (over the binary field) as $\omega(\Gamma) = \begin{pmatrix} 0 & \omega' \\ \omega'^T & 0 \end{pmatrix}$ where
\[
 \omega'_{ij} = \delta_{[i],[j+2]} + \delta_{[i],[j-2]},
\]
$[i],[j \pm 2]$ are the equivalence classes of integers modulo L, and $\delta$ is the Kronecker delta. Note that $\rank(\omega) = 2 k_h$.
Each row of $\omega'$ is a translation of another, and containes exactly two 1's separated by 4 ($L > 4$).

If $L$ is odd, then $L$ and 4 are relatively prime, and we see that $\rank(\omega') = L-1$. Since $2 k_h + k_i \leq 2L -2$, $k_i=0$ and the number of gauge qubits is $L-1$. If $L$ is even, we need to distinguish two cases. When $4 \mid L$, only $L-4$ rows are independent. It is verified easily if one cyclically rotate the rows of $\omega'$ such that $\omega'_{11}=1$. When $4 \nmid L$, there is a row that contains two 1's that are 2 columns apart, and hence $L-2$ rows become independent. In short, $k_h = \rank \omega' = 4 \lceil L/4 \rceil - 4$, and therefore $ k_h + k_i \leq L + 4$. Since $ k \geq L + 6 q_2(L) $, we conclude that there are at least 1 qubit left after gauging out the basic strings.

\section{Derivation of 2D toric code}

We apply our construction of non-CSS cubic codes to 2D square lattice. We argue that the 2D toric code is the unique stabilizer code under the construction. A square stabilizer generator $s$ has 4 corner operators. $\omega$ is a $4 \times 4$ matrix. Imposing the condition that $s$'s define a stabilizer code, we get the general form:
\begin{equation*}
 \omega = 
 \begin{pmatrix}
  0 & i & j & 0 \\
  i & 0 & 0 & j \\
  j & 0 & 0 & i \\
  0 & j & i & 0
 \end{pmatrix}
\end{equation*}
where $i,j \in \mathbb{F}_2$. In order to ensure a single site operator is the identity, we require that $\rank(\omega) = 2 m $. If $i=j=0$, $\rank(\omega)=0$. If $i=j=1$, then $\rank(\omega)=2$ and the realization of $\omega$ is
\[
 \xymatrix@!0{
  X\ar@{-}[rr]\ar@{-}[dd]&           & Z\ar@{-}[dd]\\
                         & s=s^\text{inv} &              \\
  Z \ar@{-}[rr]          &           & X           
 }
\]

This is the generator for $45^\circ$-rotated 2D toric code.
If $i=1,j=0$, then $\rank(\omega) = 4$ and a realization of $\omega$ is
\[
 \xymatrix@!0{
  XI\ar@{-}[rr]\ar@{-}[dd]&   & ZI\ar@{-}[dd]& & IX\ar@{-}[rr]\ar@{-}[dd]&   & IZ\ar@{-}[dd]\\
                          & s &              & &                         &s^\text{inv}& \\
  IZ \ar@{-}[rr]          &   & IX           & & ZI \ar@{-}[rr]          &   & XI
 }
\]
If $i=0,j=1$, we get the same realization of $\omega$ that is $90^\circ$-rotated. Note that first qubits in the odd numbered rows interact only with second qubits in the even numbered rows, and vice versa. Thus, the code is equal to the non-interacting two copies of 2D toric code. If the periodic lattice has odd linear size, it is a doubly-folded toric code.

\section{Numerical methods}
\label{section:numerical_methods}

We describe our algorithm calculating $k$. As we have seen from proofs of the upper bounds for $k$ (Table~\ref{tb:lower_upper_bound_k}), we need to know the number of independent stabilizer generators in a given finite periodic lattice. Since there are two qubits per site, a Pauli operator on $\mathbb{Z}_L$ is expressed as a $4L^3$-component binary vector. Given that the Pauli operator is $Z$- or $X$-type, the number of components is divided by two. Since there are $2L^3$ stabilizer generators, we need to calculate the rank of $2 L^3 \times 4 L^3$ binary matrix $U$. The Gauss elimination is an efficient algorithm, but the matrix is quite large. A naive sparse matrix method would not be of much help because $U$ may get a large number of non-zero components as Gauss elimination algorithm proceeds.

The proof of the upper bound gives a natural order of the sites and generators such that a large sparse submatrix of $U$ is in a row echelon form; order the sites and generators such that the independence of generators used in the proof of the upper bound is evident from the form of $U$. In this way, we can represent $U$ using memory size $O(L^4)$. It is also advantageous in view of time complexity. Note that since the field is binary, there is no multiplication. The naive Gauss elimination would require $O(L^9)$ additions, while the better representation of $U$ needs only $O(L^5)$ additions.

\end{document}